\newtheorem{thm}{Theorem}
\newtheorem{prop}[thm]{Proposition}
\newcommand{\curly}[1]{\mathcal{#1}}
\newcommand{\R}{\mathbb{R}}
\newcommand{\norm}[1]{\|#1\|}
\newcommand{\Tr}{\mathsf{Tr}}
\newcommand{\Vect}{\mathsf{Vec}}
\newcommand{\im}{\operatorname{im}}
\newcommand{\rk}{\operatorname{rk}}
\newcommand{\dist}{\operatorname{Dist}}
\def\BibTeX{{\rm B\kern-.05em{\sc i\kern-.025em b}\kern-.08emT\kern-.1667em\lower.7ex\hbox{E}\kern-.125emX}}
\begin{document}

\title{CLARITY - Comparing heterogeneous data using dissimiLARITY} 
\author[1,2]{Daniel J. Lawson\thanks{dan.lawson@bristol.ac.uk}}
\author[ ]{Vinesh Solanki}
\author[3]{Igor Yanovich}
\author[3]{Johannes Dellert}
\author[4]{Damian Ruck}
\author[5]{Phillip Endicott}

\affil[$^1$]{Institute of Statistical Sciences, School of Mathematics, University of Bristol, Bristol, UK}
\affil[$^2$]{Integrative Epidemiology Unit, Population Health Sciences, University of Bristol, Bristol, UK}
\affil[$^3$]{University of T\"ubingen, Seminar f\"ur Sprachwissenschaft; DFG Center ``Words, Bones, Genes, Tools'', T\"ubingen, Germany}
\affil[$^4$]{Department of Anthropology, University of Tennessee, Knoxville, TN, USA}
\affil[$^5$]{Unité Eco-Anthropologie (EA), Muséum National d’Histoire Naturelle, 17 place du Trocadero, 75016, Paris, France}
\maketitle

\section*{Abstract}
Integrating datasets from different disciplines is hard because the data are often qualitatively different in meaning, scale, and reliability. When two datasets describe the same entities, many scientific questions can be phrased around whether the (dis)similarities between entities are conserved  across such different data. Our method, CLARITY, quantifies consistency across datasets, identifies where inconsistencies arise, and aids in their interpretation. We illustrate this using three diverse comparisons: gene methylation vs expression, evolution of language sounds vs word use, and country-level economic metrics vs cultural beliefs.  The non-parametric approach is robust to noise and differences in scaling, and makes only weak assumptions about how the data were generated. It operates by decomposing similarities into two components: a `structural' component analogous to a clustering, and an underlying `relationship' between those structures. This allows a `structural comparison' between two similarity matrices using their predictability from `structure'. Significance is assessed with the help of re-sampling appropriate for each dataset. The software, CLARITY, is available as an R package from \href{https://github.com/danjlawson/CLARITY}{github.com/danjlawson/CLARITY}.

\section{Introduction}
\label{sec:intro}

The need to compare different sources of information about the same subjects arises in most quantitative sciences. With sufficient effort, it is always possible to construct a model that accounts for data of arbitrary complexity. But without this time-consuming work, can we visualise the data to determine whether the different sources describe the same qualitative phenomena?
 
Many datasets are best expressed in terms of similarities or differences between subjects, and are frequently compared by plotting the resulting matrices side by side. Examples include the co-evolution of language and culture \cite{SokalGeneticgeographiclinguistic1988}, as well as genetics and phenotype \cite{Creanzacomparisonworldwidephonemic2015}, which are all linked through their geographical constraints and shared history. Further uses include identifying brain function using neural activity patterns \cite{KriegeskorteRepresentationalSimilarityAnalysis2008a}, understanding disease through comparing the expression of genes with biomarkers \cite{GrigoriadisMolecularcharacterisationcell2012}, toxicology prediction comparing the activation of biological pathways \cite{RomerCrossplatformtoxicogenomicsprediction2014} and  understanding bacterial function by comparing nucleotide variation to that of amino acids \cite{ZhangCompletemitochondrialrDNA2018}.

We describe a new method that is computationally efficient and can be applied whenever similarities or dissimlarities can be defined. We present three diverse examples to demonstrate the potential utility of this method. The first is comparing gene methylation to expression in simulated data containing anomalies; the second compares lexical change to phonetic change data in linguistics, and the third examines the interaction between culture and economics. Beyond providing a new method with extremely wide applicability, this paper aims to focus attention on the problem area of \emph{structural comparisons} in general.

\subsection{The purpose of CLARITY}

Figure \ref{fig:sim1} is a `graphical abstract' to illustrate what CLARITY is designed to detect. Rather than learning a model, CLARITY identifies features of one dataset that are anomalous in the second - marginalising out structures present in both. It does this from information on subjects, i.e. labelled entities on which we have data such as countries, languages, genes, etc. --- which are matched across datasets. It looks for structures --- that can be loosely thought of as clusters or shared variance components --- that are present in the second dataset, but were not in the first. It reports by identifying subjects with large residuals that `persist', i.e. can't be predicted from the first dataset, for a wide range of complexity of representation. 

CLARITY works with dissimilarities (and equivalently, similarities), which until they are formally defined (see Section \ref{sec:methodsmath}), can be thought of as generalising distances between all pairs of items. 
Similarities can often be defined even when the data does not form a convenient space, e.g. neural spike trains \cite{victor1997metric} or genetic relatedness \cite{lawson_inference_2012}. Similarities are more general than covariances and make a richer representation than a tree --- all trees can be represented as a distance, but the converse is not true. They can be defined on regular feature matrices, or on richer spaces, and are robust to the inherent complexity of the data. However, the better chosen a similarity measure is, the better empirical performance can be expected.

CLARITY should have wide application across many disciplines. The paper is written to allow non-specialists to gain insight into the approach and its correct interpretation. Users of the methodology should read the Results and Discussion section, which include simulated and real examples that should be insightful for specialists and non-specialists of the application area. Further mathematical justification and technical details are available in Methods.

\begin{figure}[!ht]
\begin{center}
 \centerline{\includegraphics[width=1.0\columnwidth]{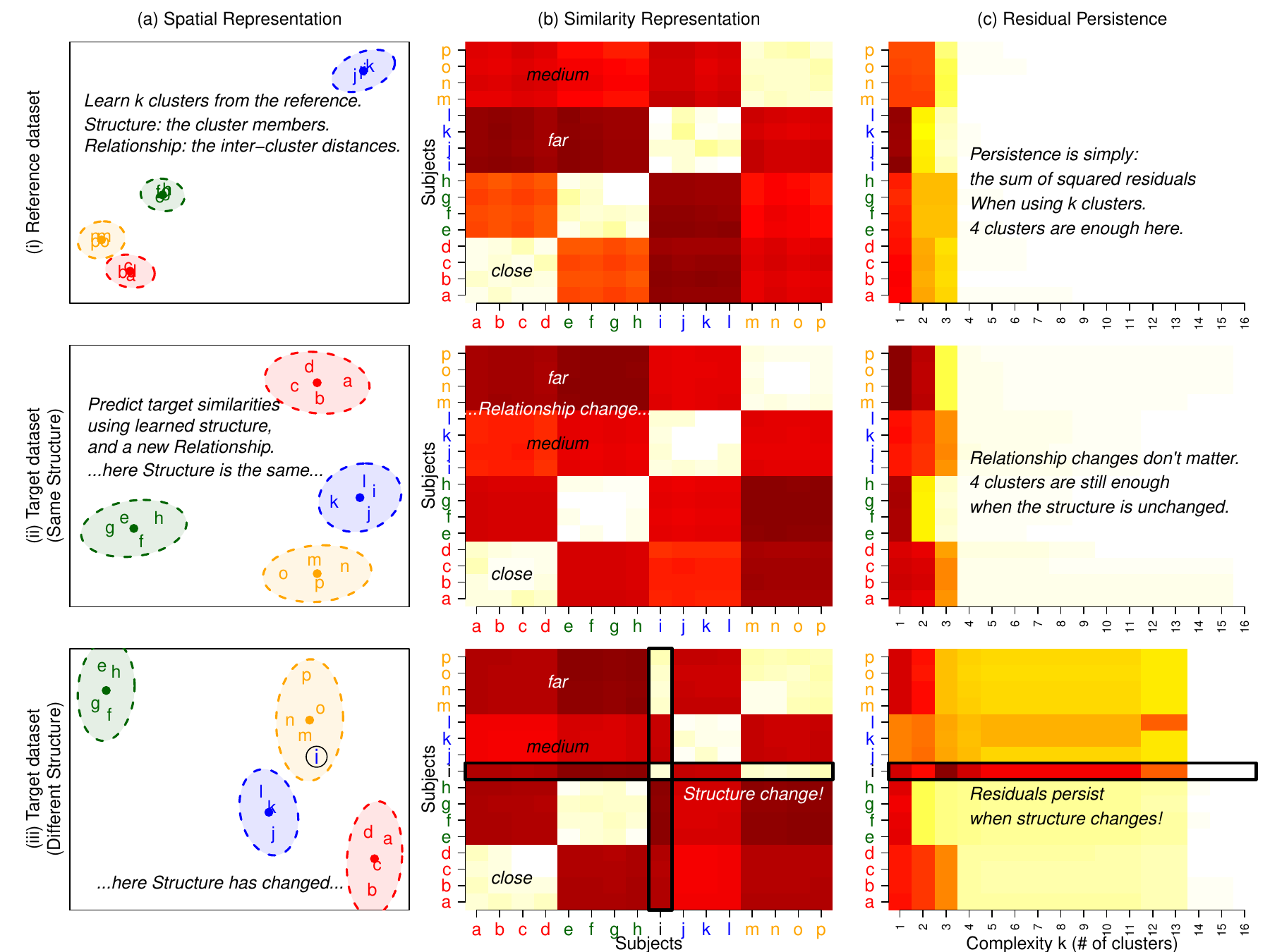}}
 \caption{
   {\bf What is CLARITY For?} CLARITY compares a \emph{reference} dataset (i) to a \emph{target} dataset (ii-iii) to perform a \emph{structural comparison}.
   (a) \emph{Structure} is defined in terms of the similarity between \emph{subjects} (here letters a-p) that in this example fall into \emph{clusters}. These clusters have a \emph{relationship}, here the distance between clusters.
   (b) The data are quantified as a \emph{similarity matrix} between all subjects. We learn the Structure by minimising residuals in the reference (b-i) and predict the target similarity (b-ii,b-iii) by relearning the relationship. In (ii) the Structure is the same as the reference but the relationship changes. In (iii), a subject also changes cluster, leading to a structural change.
   (c) This is captured in Clarity using a residual \emph{persistence chart}. This quantifies how well we predict each subject at a range of representation complexity (here, number of clusters). This allows CLARITY to identify structural change separately to relationship change, for diverse models and data. Box 1 defines the italicised terms and Section \ref{sec:claritydef} discusses this figure.}
   \label{fig:sim1}
\end{center}
\end{figure}

\subsection{Overview of comparison approaches}

How different is the information provided about \emph{the same} subjects in two datasets? For what follows, we are interested in the relationship between the subjects, rather than particular features in the datasets, and we assume that we have enough information to build a meaningful similarity matrix between the subjects.

The gold standard approach involves \emph{generative modelling}, in which the joint model for both datasets is specified. Examples include host-parasite coevolution \cite{BrooksTestingContextExtent1979} and comparing linguistic and genetic data \cite{AmorimBayesianApproachGenome2013}. However, each analysis is bespoke, requiring an expert modeller able to specify a joint model for the two datasets. 

If the datasets take a matrix form then \emph{testing} whether two matrices are statistically equivalent is another natural starting point. For this, Mantel's test \cite{Manteldetectiondiseaseclustering1967} and related approaches \cite{SmouseMultipleRegressionCorrelation1986}, can be used. However, for the sort of scientific investigation that we are considering here, the null hypothesis that the two datasets have `the same' distribution, or, in the case of shared historical processes, are 'independent' of each other, can often be rejected \emph{a-priori}. We are thus interested in richer comparisons, able to highlight specific subjects that behave differently in two datasets. 

Data can be directly compared by transforming one to look like the other. When applied to matrices, an important class are \emph{Procrustes transformations} \cite{HurleyprocrustesprogramProducing1962}, which use rotation, translation and scaling \cite{SchneiderMatrixcomparisonPart2007} to perform the maximally achievable matching.
Procrustes transformations have been used for testing matrix equality under transformation \cite{JacksonPROTESTPROcrusteanRandomization1995}, and are often combined with initial rank reduction via \emph{Spectral decomposition} for matrix comparison, e.g. \cite{Peres-NetoHowwellmultivariate2001}.

If we are not constructing an explicit model of both datasets, nor testing whether they are identical, then the remaining options revolve around constructing summaries that can be compared. Many methods exist to compare \emph{covariance matrices}. Testing \cite{SteigerTestscomparingelements1980} is again straightforward. Metrics comparing covariance matrices exist \cite{ForstnerMetricCovarianceMatrices2003}, while spectral methods, such as common principal component analysis \cite{FluryCommonPrincipalComponents1988}, allow theoretical statements to be made about the results of a comparison \cite{FluryAsymptoticTheoryCommon1986}.

Another important class of summary are \emph{tree-based methods} that represent each dataset as a tree, which can be compared using standard metrics. These include topological distance \cite{PennyUseTreeComparison1985,Billesurveytreeedit2005}, and tree-space \cite{NyePrincipalcomponentanalysis2017}, and the approach is implemented in popular packages such as `phangorn' \cite{Schliepphangornphylogeneticanalysis2011} in R. The downside is that handling model uncertainty is difficult, with only some types of tree being stable to small changes in the data \cite{carlsson_characterization_2010}. Often the data are not completely hierarchical - for example, tree-based methods can be misleading when the data have a mixture element to them \cite{mossel_phylogenetic_2005}. Conversely, whilst mixtures might be compared using fixed-dimensional mixture based methods \cite{TippingDerivingclusteranalytic1999,MahalanabisApproximatingdistancesmixture2009}, this can be misleading when the data have an hierarchical element to them \cite{lawson2018tutorial}.

With CLARITY we are addressing scientific questions that relate to which similarity structures are present in two datasets. There are other scientific questions that might be asked. For example, Canonical Correlation Analysis (CCA) \cite{hotelling1936relations, seber2009multivariate} and related approaches can be applied on datasets with matched features, as in e.g. ecology \cite{ter1987analysis} and machine learning \cite{hardoon2004canonical, NIPS2017_7188}. CCA addresses the question of which features in one dataset are important for understanding another. Because of this focus on features, CCA cannot be used directly in any of the simulations or real datasets that we consider below. Qualitatively this is because the datasets can match perfectly if the number of features is higher than the number of subjects. 

\section{Results}
\label{sec:results}

\subsection{High level view of CLARITY for comparing data from different sources}
\label{sec:claritydef}

This section contains a high-level mathematical description of the sort of comparison CLARITY is useful for. Technical mathematics is left for the Methods, Section \ref{sec:methodsmath}, but at a high level, the key concepts are given in Box 1.

CLARITY allows comparison of arbitrary datasets for which the same set of $d$ subjects are observed.
It represents the similarity of a \emph{reference} dataset $Y_1$ non-parametrically using increasingly rich representations of \emph{complexity}  $k \le d$.  At each $k$ we learn a \emph{structure} $A_k$, which is a $d \times k$ matrix, and a \emph{relationship} $X^{(k)}$ between the structures, which is a $k \times k$ matrix. Learning is by minimising the squared error of the estimate $\hat{Y}_1$, which is the sum of the squared residuals.

\textbf{Example:} {\it In Figure \ref{fig:sim1}b $Y_1$ is a similarity. Figure \ref{fig:sim1}a treats $A_k$ as a clustering learned from the Reference Similarity $Y_1$ (Figure \ref{fig:sim1}b). The relationship $X^{(k)}$ is then the similarity between clusters, and the complexity is the number of clusters $k$.}

We make a \emph{structural comparison} between the reference $Y_1$ and the target $Y_2$, by keeping the \emph{same structure}  $A_k$ but fitting a new relationship $X_{2}^{(k)}$ to make a prediction $\hat{Y}_{2}^{(k)}$.
The procedure is:

\begin{enumerate}
\item[0] Construct (dis)\emph{similarity} matrices: both a reference $Y_1$ and a target $Y_2$.
\item[1] Learn structure: Learn  $\hat{Y}_{1}^{(k)}$ in terms of structure $A_k$ and relationship $X^{(k)}_1$ to best predict $Y_1$ for a range of complexities $k$ by minimising the total error, subject to constraints. 
\item[2] Predict conditional on structure:  Predict $Y_2$ using $\hat{Y}_{2}^{(k)}$ which uses $A_k$ at each complexity $k$. 
\item[3] Evaluate prediction: examine the residuals $R_{2}^{(k)} = \left(Y_2 -  \hat{Y}_{2}^{(k)} \right)$ as a function of $k$. Visually report residuals that are present for many $k$.
\end{enumerate}

\textbf{Example, continued:} {\it In Figure \ref{fig:sim1}a $A_k$ are clusterings learned from the reference similarity $Y_1$ with each of $k=1, \ldots, d$ clusters. We also learn the relationship $X^{(k)}_1$ i.e. similarity between clusters in that dataset. We then predict $Y_2$ from each set of clusters $A_k$, but learn a completely new relationship between clusters $X^{(k)}_2$. With few clusters, we have a bad representation of $Y_1$ whilst with many clusters we overfit it. We are therefore interested in subjects that are poorly explained for many intermediate clusterings, i.e. those with persistent residuals.}

The CLARITY model uses a range of \emph{complexities} $k$ to represent data. The full set of structures therefore quantifies a rich range of models. For example, it can be constrained to a hierarchical clustering (i.e. a tree), and in Section \ref{sec:sim} we show that if the data are tree-like, the structure can be interpreted in terms of a `soft tree' (Figure \ref{fig:sim2}), which can capture deviations from a strict tree model.

\textbf{Example, continued:} \emph{In Figure \ref{fig:sim1}c, persistencies at different complexities $k$ are plotted. Observe that at high $k$, residuals and consequently persistences largely perish. In (ii), we have the same clustering (i.e.~structure) but a different relationship, and in c(ii) the large decrease in persistences happens at $k=4$, i.e.~as soon as complexity $k$ reaches the true number of clusters in the data. In contrast, in c(iii), the data are generated under a clustering that differs by one subject from the structure (i.e.~the clustering) from the first dataset. In this condition, persistences remain high until much higher complexity, especially so for the anomalous subject $i$. }

\begin{mybox}
 \label{box:1}
 {\bf Box 1: Important concepts in CLARITY}
    
 {\bf Similarity} $Y$: a $d \times d$ matrix comparing all subjects, for which `closer' subjects have large pairwise values. A typical example is a covariance. CLARITY operates entirely equivalently with dissimilarities for which `close' implies small values; a typical example is the Euclidean distance.
 
 {\bf Structure} $A_k$: A $d \times k$ matrix, providing a representation of each subject in $k$ dimensions.  The choice of \emph{structure} is the most important choice in CLARITY. It is learned subject to constraints on $A$, conditional on the relationship. Three important Structures are a) a clustering, with entries of $A$ being $0$ except when subject $i$ is in cluster $k$ in which case $A_{ik}=1$. b) A Mixture: rows of $A_{ij}$ sum to 1 and $A_{ij}$ are non-negative. c) An unconstrained subspace, in which case $A$ is the top $k$ eigenvectors, as seen in Spectral methods (Singular Value Decomposition, SVD and Principal Components Analysis, PCA).

{\bf Relationship} $X^{(k)}$: The \emph{relationship} between structures depends on the structure. We again learn it by minimizing the error conditional on the structure. In this paper we do not consider constraints. For a clustering, the relationship is the similarity between clusters, or the similarity between latent clusters in a mixture model. It describes the `branch lengths' of a tree. For PCA, the relationship is a rotation, translation and scaling of the matrix of singular values.

{\bf Structural comparison}: The \emph{structure} learned from the reference $Y_1$ at each complexity is used to predict the target $Y_2$. $Y_2$ may be numerically quite different if the relationships are different. However, as long as the datasets can be predicted in this sense then we say that the matrices are defined to be \emph{structurally similar}; this will happen if the same clusters, mixtures or eigenvalues are important in both datasets and describe the same subjects.

{\bf Residual persistence charts}: We use graphical summaries to present useful scientific insights, focussing on structures that persist over a range of model complexity. These are inspired by the concept of persistent homology from Topological Data Analysis \cite{WassermanTopologicalDataAnalysis2018}.  When the complexity $k$ is sufficiently high, every (full rank) dataset can predict every other, so the focus is on which structures in $Y_2$ are explained late in the sequence defined by $Y_1$, i.e. persist. This is captured by the residual persistence $P_{ik}$, a matrix whose entries for data subject $i$ at a complexity $k$, are the sum (over $j$) of the squared residuals. 
\end{mybox}

\begin{figure}[!ht]
\begin{center}
 \centerline{\includegraphics[width=1.0\columnwidth]{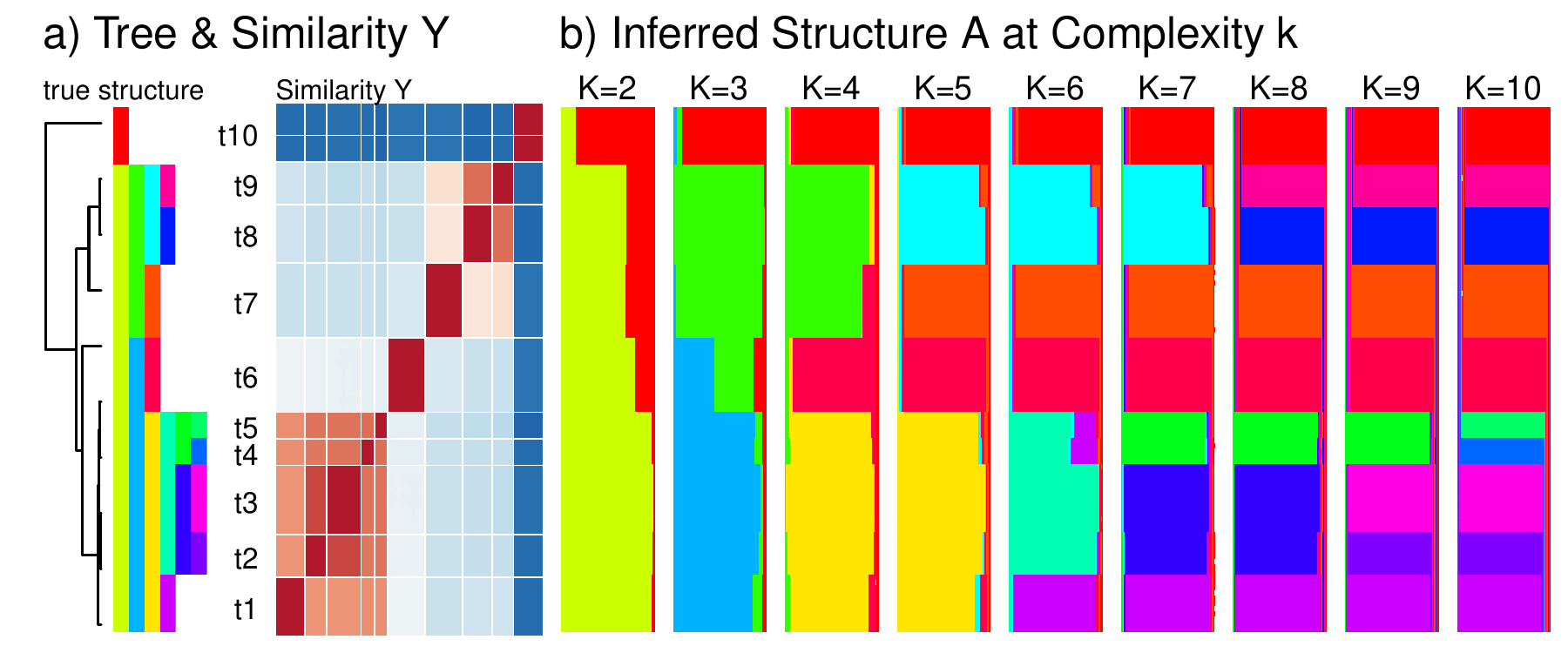}}
 \caption{
 {\bf What is complexity in CLARITY?}
 CLARITY uses a non-parametric representation that captures a wide class of model, which is interpretable if the truth is interpretable. This is demonstrated with a hierarchical simulated dataset (see Section \ref{sec:methodssim}).
   a) Subjects are generated as belonging to a cluster under a `true structure'. Each feature is shared (with noise) with all descendent subjects in the tree, so may represent any branch of the tree, each of which is assigned a colour.
   CLARITY models the Similarity $Y$ between samples. 
   b) In such a hierarchical dataset, an inferred mixture Structure at each complexity $k$ relates to a `soft hierarchy', i.e. a set of mixtures $A_k$ with $k$ components whose latent clusters approximately represent the branches of the tree. Adding components explains different branches, until eventually only noise is explained. (Parameters: $d=100$, true $k=10$, $\sigma=0.005$.)}
\label{fig:sim2}
\end{center}
\end{figure}

{\bf Key decisions when using CLARITY}: There are just two substantive choices to make when running CLARITY. First, how to construct the similarity matrices $Y_1$ and $Y_2$, which may be determined by the data. Second, the choice of structure $A_k$, for which the two main approaches we provide are SVD, which is computationally convenient, or a mixture model, which is more interpretable. We have not found a case for using clusters in practice, as they are not robust. There is one final technical choice, which is the statistical procedure for assessing significance, which we address in Section \ref{sec:pvals}. Good practice for all choices is discussed in Section \ref{sec:bestpractice}.

\subsection{Learning Structure in CLARITY}
Two similarity matrices are structurally similar if one can be predicted from the other, using the partial representation we have defined as \emph{structure}.  CLARITY is comparing similarity matrices $Y$ which requires a \emph{quadratic} model (in $A$) rather than the more familiar linear model:
$$\hat{Y} = A X A^T,$$
where $A$ and $X$ are intended to be `simpler' (concretely, $k$-dimensional) approximations to $Y$.

\textbf{Example, continued:}  {\it When $A$ is a clustering, then rows of $A$ are subjects and columns are clusters. Row $i$ of $A$ is a vector with value $1$ if subject $i$ is in cluster $j$. We then predict the similarity $Y_{ij}$ with an estimate $\hat{Y}_{ij}$ between subject $i$ and $j$ by finding which cluster $c_i$ and $c_j$ each is in, and replacing it with the similarity between those clusters, $X(c_i,c_j)$.}

We always seek to minimise a loss $L(A,X)$ which is the (squared) error. Setting $R(i,j) = Y(i,j) - \hat{Y}(i,j)$, the loss is:
$$L(A,X) = \sum_{i=1}^d \sum_{j=1}^d R(i,j)^2.$$

If this minimisation is unconstrained, then (see Methods Section \ref{sec:methodsmath}) $A$ is the matrix of the (first $k$) eigenvectors and $X$ the diagonal matrix of (first $k$) singular values of $Y$. $Y_2$ is structurally similar to $Y_1$ if it can be predicted using this learned $A$ and a new $X_2$. Technically, if $Y_2$ is poorly predicted at complexity $k$ then it is not close to the subspace spanned by the first $k$ principal components of variation in $Y_1$.

When $A$ is constrained to be a mixture (that is, its elements are non-negative and sum to one), $X$ describes the similarity between `latent clusters', and the rows of $A$ describe mixtures between these clusters. Similarity matrices are hence `structurally similar' if they can be described by the same mixture.
This mixture model is interpretable, as we demonstrate in the simulation study. Specifically, a `structural difference' at complexity $k$ means that the latent clusters of $Y_2$ are not in the $k$ most important latent clusters of $Y_1$. Further, the subjects that are poorly predicted, i.e. whose cluster membership is not captured, can be read off from the residual matrix.

Persistences and residuals decrease with model complexity and are affected by correlations between similarities.
Despite the complexities of working with a similarity matrix, in Theorem \ref{thm:SVDperturb} (Methods Section \ref{sec:math}) we prove that the model is stable in the presence of noise, so that if two datasets were resampled then their residuals based on our notion of structure are not expected to change by a large amount. The theoretical and simulation results together demonstrate that the CLARITY paradigm is performing a meaningful comparison.

\subsection{Structure of the Examples}

We now present use cases. First we consider a simulation study for a mixture of trees model in Section \ref{sec:sim}, to demonstrate how CLARITY can be used to identify differences in large-scale structure. In Section \ref{sec:epigenetics} we consider simulated data based on methylation and gene expression data to give an example in which CLARITY can be used for anomaly detection. Section \ref{sec:languagechange}  is a real-data example from linguistics, which is smaller in scale and therefore more subtle in interpretation. Finally, Section \ref{sec:culture} examines the relationship between culture and economics at the country-scale.
The take-home messages from these examples are summarised in Section \ref{sec:bestpractice}.

\subsection{Comparing Simulated Hierarchical Mixtures}
\label{sec:sim}
Hierarchical data is common and naturally interpretable using CLARITY. In this section we simulate subjects related by a tree and insert an interpretable structural difference between two datasets. The relationship between structures includes features such as the branch lengths of the tree. The structure itself is defined by the membership of subjects in the clusters. Both are detectable with CLARITY but changing structure creates a much larger effect in the data.

\subsubsection{Simulation model}

The model creates data that is generated with $N=100$ subjects observed at $L=2000$ features, grouped into $k=10$ clusters related via a tree. This data is used as a reference for a CLARITY model.  In \emph{Scenario A} we construct a target by regenerating the tree with the same topology but altering the branch lengths, and resimulating data. In \emph{Scenario B}  we construct a target with the branch lengths changed as in Scenario A, but additionally change the structure $A$. See Section \ref{sec:methodssim} for details.

\subsubsection{Hierarchical Mixture Inference}

Figure \ref{fig:sim1} shows that CLARITY is insensitive to changes in the heatmaps themselves, but remains sensitive to changes in structure.
Figure \ref{fig:sim2} shows how the CLARITY mixture model infers detailed structure of $A_k$ capturing the clusters present in the data when the tree is `cut' at different heights.
Figure \ref{fig:simexplain} illustrates how this is achieved.

When we use the $Y_1$ structure (Figure \ref{fig:simexplain}a) for prediction of the second similarity matrix $Y_2$ (Figure \ref{fig:simexplain}c,e), several situations may occur. In Scenario A-1, the trees share the same split ordering, and any differences will be completely absorbed by differences between $X_1$ and $X_2$. The residuals and persistences of $Y_2$  will be distributed as for  $Y_1$ (see Section \ref{sec:pvals} for how this is estimated). 
In scenario A-2, the trees share the same topology but the split order differs. The structures in $Y_2$ may not appear in exactly the same order as in $Y_1$ but still all appear in the top $k<k_{max}=10$ structures representing the tree.
The residuals and persistences may be larger at lower complexity, as happens in Figure \ref{fig:simexplain}c-d and Figure \ref{fig:sim1}c, but the entire difference can be explained at some complexity threshold. 
Things are different in Scenario B when $Y_2$ has a different topology to $Y_1$ -- perhaps containing mixtures as in Figure \ref{fig:simexplain}e-f or new clusters, such as Figure \ref{fig:sim1}a. Only then will important structure be absent until much higher $k$ and this will result in high \emph{and persistent} residuals for the affected data (Figure \ref{fig:simexplain}c, Figure \ref{fig:sim1}iii).



The persistence $P$ in Figure~\ref{fig:simexplain}f identifies the clusters that are affected by the structural change: cluster group t2 has significantly inflated $P$. Examining the residuals themselves at a specific $k$, Figure~\ref{fig:simexplain}e identifies the two clusters affected, which have highest off-diagonal shared residuals. In addition, they show that the `recipient' cluster t2 has consistently high pairwise residuals. The `donor' cluster t9 does not have exceptional residuals overall, but does have the highest pairwise residual with the `recipient' t2.
Of note is that low-dimensional representations ($k=1,2$) are not helpful because there is high intrinsic variability (i.e. these persistences are large but not significant). We must have a `good enough model' of $Y_1$ before it is useful to understand $Y_2$.

This interpretation is robustly replicated in simulations, as is shown in Figure \ref{fig:simexplain}g-h for 200 different mixture-of-tree simulations. Specifically, 
\begin{itemize}
\item Persistence is high in `recipient' clusters of $Y_2$ containing a mixture of two different signals of the structure found in $Y_1$.
\item Squared residuals of the recipient cluster are high with all clusters that are topologically close to affected subjects. This happens both under the original structure and the `new' structure in which the `recipient' and `donor' clusters are close.
\item Persistence for `donor' clusters is not exceptional, but they are identifiable from their very high residuals with the recipient cluster.
\end{itemize}
In this way, the residuals for tree-like data can be interpreted topologically by first identifying clusters of subjects that experience a high residual persistence. The source of the mixture can be identified from which clusters (that are dissimilar in the reference) have increased residuals with these subjects.

\begin{figure}[!htbp]
\begin{center}
 \centerline{\includegraphics[width=0.8\columnwidth]{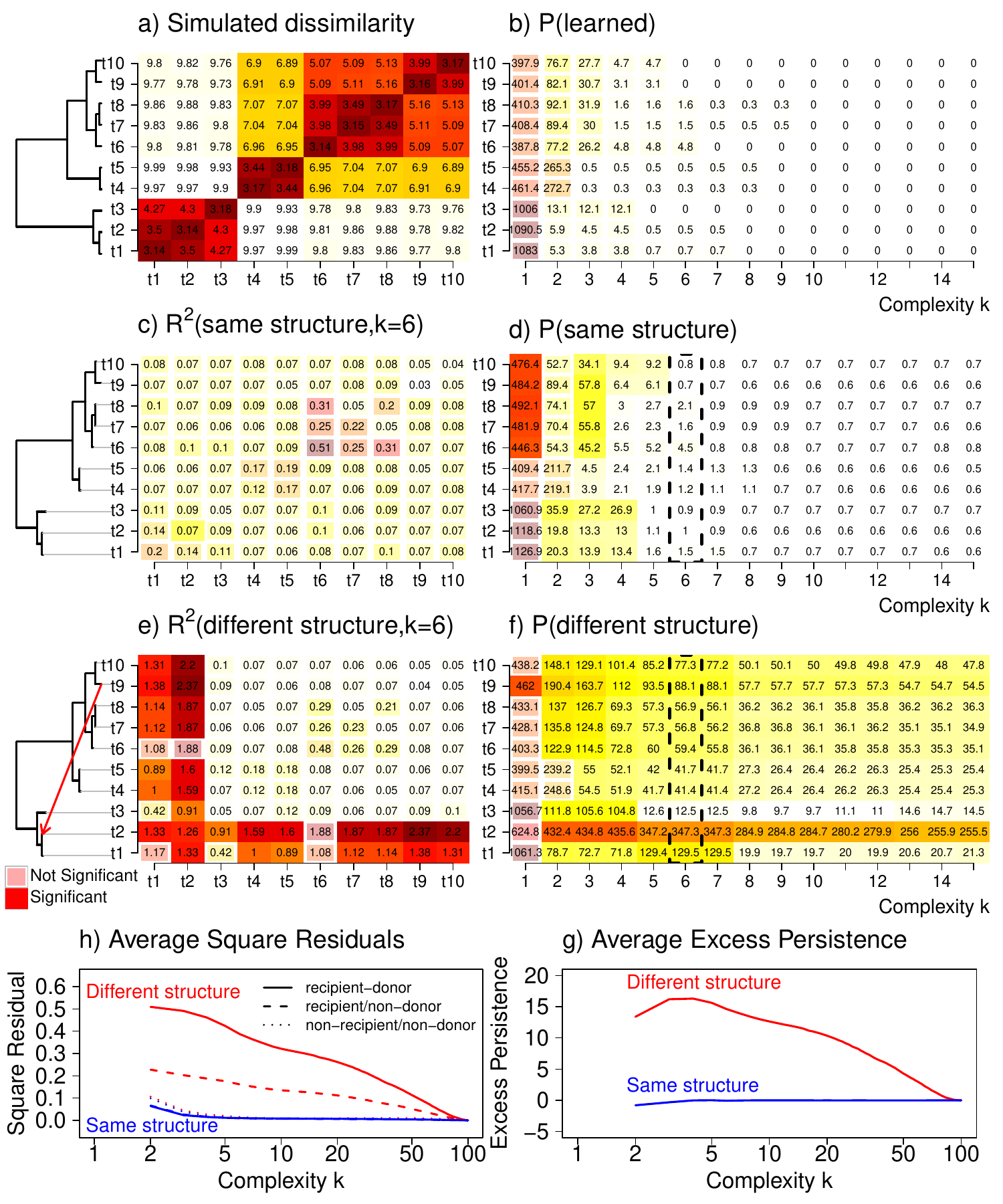}}
 \caption{Interpreting residuals and persistences using simulated data, for $d=100$ subjects from $k=10$ clusters. Dissimilarities shown are averaged within clusters, whilst residuals and persistences are summed to create population values. a) Learned tree and dissimilarity matrix. b) Residual persistence chart for the learned data. c) Squared residuals and d) residual persistence, for new simulated data with the same structure as in a). e) Squared residuals and f) residual persistence, for simulated data with a different structure to a), for which some subjects in cluster t2 are a mixture with t9. For c-f) lack of significance at $p=0.01$ is illustrated by drawing a smaller rectangle. 
g-h) Replicated results averaged over 200 simulations. g) `excess persistence' which is the residual persistence of samples in the recipient cluster -- i.e. t2 in the tree in e) -- with the mean residual persistence of the other samples subtracted. h) Summed squared residuals for different parts of the residual matrix. Shown is the `recipient' compared to the `donor' -- t9 in  e) -- as well as the recipient compared to all non-donor samples, and the average residuals for all pairs of samples that were neither recipient nor donor. Simulation settings: $\sigma=0.05$, $\beta=0.5$, for which in `Different Structure', half of the recipient cluster is affected by the mixture.}
\label{fig:simexplain}
\end{center}
\end{figure}


Whilst the Mixture model allows interpretation of how structural differences can be detected, both the Mixture model and SVD model make comparable predictions. Supp. Figure 1 shows that the \emph{same} structural similarity information is learned from the SVD model as in the Mixture model. The models predict $Y_1$ with near identical performance. Further, they both agree that the presence of different structure leads to poorer prediction of $Y_2$ from $Y_1$ for a wide range of $k$.

In terms of computational complexity, the SVD method is dominated by the SVD ($O(d^3)$). For reference, it takes 6.75 minutes to run our SVD model for $d=5000$ on a personal laptop, most of which is computing the SVD. The mixture model is dominated by a $d \times d$ matrix inversion ($O(d^3)$ or better) but is in practice slower as the convergence time of the iterative algorithm scales with $d$.

\subsection{Comparing Simulated Gene Methylation to Gene Expression}
\label{sec:epigenetics}

This example focusses on identifying \emph{anomalies}: subjects (here loci) that behave differently in one dataset compared to another. The CLARITY framing using a similarity matrix means that the datasets need not have features in common.

It is very common in epigenetics to wish to compare different measurements on genes. These are often measured on different scales --- methylation is a proportion whilst expression is a positive number --- and writing a formal model is hard. Further, the raw data may not be available to each researcher as they are identifiable and sensitive. It is therefore very helpful if summaries, such as those based on similarities, can be compared.

To emphasise the utility of CLARITY we focus on an epigenetic simulation model from the literature \cite{gu2016complex}, in which Methylation and Expression data are generated from \emph{independent} case-control experiments that describe the same set of genetic loci, i.e. positions in the genome, that correspond to known genes. These loci are our subjects.
We do not claim that this is a realistic model, only that it has received attention.

Methylation is a chemical process that reduces the accessibility of DNA for transcription, and is therefore negatively associated with gene expression. It can be measured using high-throughput arrays (e.g. \cite{bibikova_high_2011,min_genomic_2020}) that target known methylation loci. Similarly, gene expression can be measured using high-throughput arrays by capturing the transcribed RNA \cite{lockhart_genomics_2000}. However, both approaches are subject to a variety of noise, including inter-cell and inter-subject variability, high stochasticity in measurement from the amplification process, and so on. This variability adds to signal from causative factors of interest. The simulation therefore assumes only a -5\% average correlation.

The simulation is for a case-control study in which some people were controls, and others had a tumor. Methylation is simulated in different classes (`hyper' and `hypo' methylated loci) which interact with case-control status. Expression is simulated conditional on methylation, with large and random noise. Finally, anomalous loci are chosen in which the relationship between methylation and expression is reversed. See Methods \ref{sec:methysim} for full details.

The important feature of this setup is that the set of people considered to quantify methylation and expression are independent. The only thing they share in common is the set of loci on which data are reported. Figure \ref{fig:methy} illustrates this dataset and highlights the ability of CLARITY to extract the anomalous loci, despite the considerable structure in this dataset.

\begin{figure}[htb]
\begin{center}
  \centerline{\includegraphics[width=1.0\columnwidth]{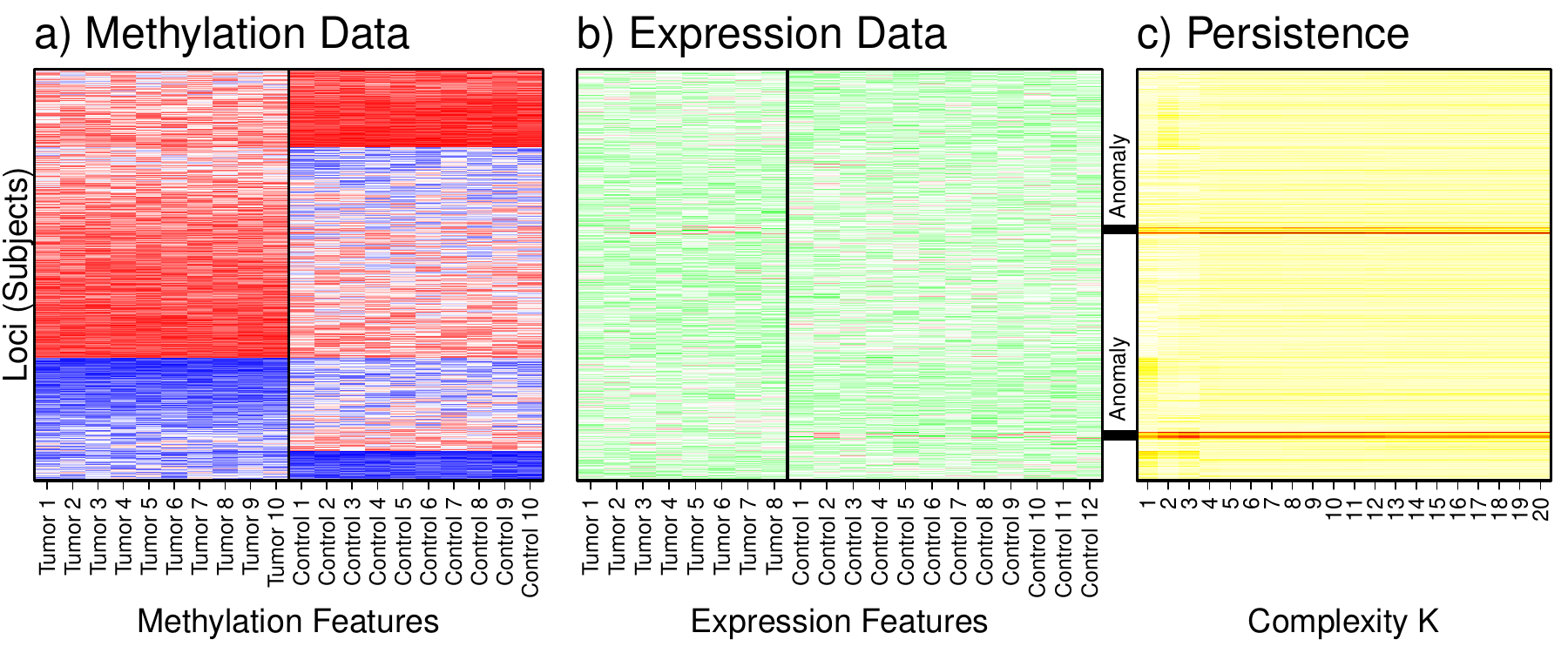}}
  \caption{
    A simulated Methylation/Expression heatmap comparison from \cite{gu2016complex} with added anomalies (see Methods \ref{sec:methysim}).
   a) A reference dataset is used to learn structure; here, simulated Methylation patterns across the genome.
   b) The structure is used to predict the similarities in the target dataset; here, gene expression data at the same loci, with inserted anomalies. The data need not describe the same features (here, samples with Tumor/Control status).
   c) \emph{Persistent residuals} over a range of $l$ indicate which subjects (here, loci) are structurally different between the target data and the reference data, accurately identifying the anomalies.
  }
  \label{fig:methy}
\end{center}
\end{figure}

Figure \ref{fig:methy} shows the two input datasets $Y_1$ and $Y_2$, which contain some clear visual structures, and specifically loci that differ between Tumor and Control. It then shows the \emph{persistent residuals} as estimated by CLARITY. A subset of these are `anomalies', one cluster of which (top) has lower expression than expected in Tumor samples, the other has lower expression in Controls. Both are clearly highlighted via the Persistence chart (Figure \ref{fig:methy}c) over a range of $k$, even though the changes are individually small.

\subsection{Two types of language change}
\label{sec:languagechange}

In this example, we also search for \emph{anomalies}, but in contrast to the previous case, our finding here is both surprising and scientifically significant. It therefore merits explicit hypothesis testing, as well as a number of further checks making sure that the effect we find is not spurious. We compare two different, but non-independent types of language change: phonetic, or sound, change; and lexical, or word-replacement, change. There is no question that sound (i.e. phonetic) change and word-replacement (i.e. lexical) change are correlated because they result from a shared historical process: both types of change are inherent to the transmission of language from one generation of speakers to the next. There are two general sources for both phonetic and lexical change. First, each language changes on its own as time proceeds, even in complete isolation from external influences \cite{Trask-Millar-2015}. Second, languages can influence each other when there are multilingual people, with this process being called \emph{language contact} \cite{Matras-2009}.

In the linguistic literature, it is often argued qualitatively, that phonetic and lexical change can be unevenly favoured by different social situations of language transmission and contact, see e.g.~\cite{Thomason_Kaufmann_1988}. Here, we use CLARITY to provide a quantitative test, the first such experiment known to us. Figure~\ref{fig:linguistic}a-b, demonstrates that the two change types induce  closely aligned similarities between language. While this is not surprising, given that the same factors influence both types of linguistic change, we show that, despite this high degree of correlation, there are still detectable `structural' differences between lexical and phonetic changes. Two components of our comparison are crucial: (i) we use a large dataset that enables us to apply CLARITY significance testing and thus derive a statistically credible result; (ii) allowing the CLARITY `relationship' to differ for phonetic and lexical, we effectively permit the rates of both types of change to differ considerably without the matrices becoming structurally dissimilar; in other words, the null hypothesis of CLARITY is broad enough that its rejection would be informative. 

The true history of both phonetic and lexical change is unknown, but can be conceptualised as a graph that consists of a ``vertical'' inheritance via a tree that captures independent change, and ``horizontal'' edges that capture change through language contact. Such a graph induces a similarity matrix between languages. Many graphs may induce the same matrix, making direct inference of the history graph impossible in the general case, Figure~\ref{fig:indistinguishability}a-b. But the similarity matrix does allow us to distinguish between different \emph{classes} of graphs, cf.~Figure~\ref{fig:indistinguishability}a-b and Figure~\ref{fig:indistinguishability}c. With CLARITY, we can infer changes to the structure of the underlying graph without explicitly learning that graph. With two matrices representing similarity due to phonetic vs.~lexical change, we can use CLARITY to find out whether phonetic and lexical change go hand in hand. Our null hypothesis is that lexical and phonetic change are aligned, because they ultimately stem from the same interactions between and within speech communities. It is the rejection of this null hypothesis that would be scientifically interesting. This is an appropriate setup for applying CLARITY, which looks for evidence of differences between two (dis)similarity matrices.

\begin{figure}[!htbp]
\begin{center}
\includegraphics[width=0.9\columnwidth]{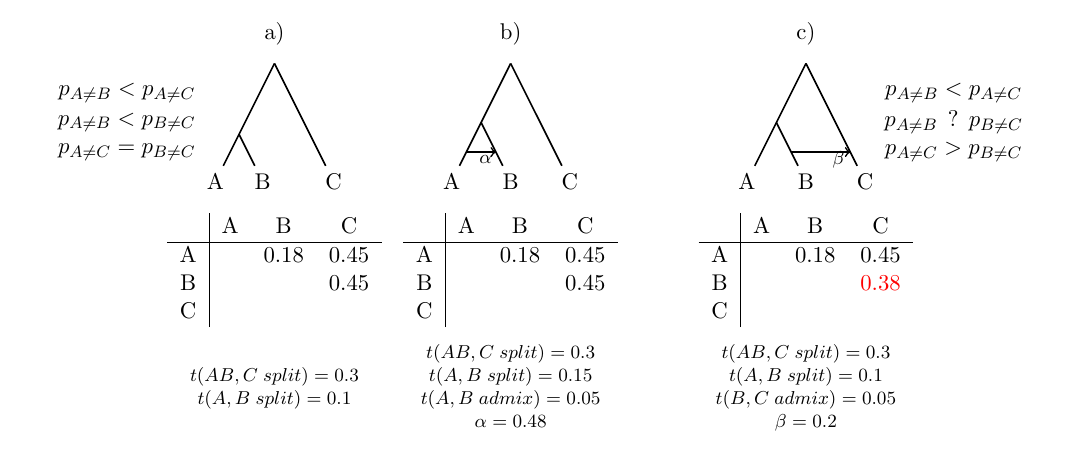}

 \caption{Some history-of-change graphs are not distinguishable from dissimilarity matrices, and others are. Let each of A, B, and C be a feature descending from the root node, with a constant rate of change. Then the probability of a mismatch between X and Y, $p_{X\neq Y}$, is a weighted sum of terms $1-e^{-t}$, where $t$ is the length of a path between $X$ and $Y$ in the graph, and the weights are given by the probabilities a given path was taken, determined by admixture proportions $\alpha$ and $\beta$ in b) and c).  Graphs a) and b) differ, but with specific values for the times of splits and $\alpha$ can lead to exactly the same probabilities of mismatch, and thus the same dissimilarity matrices. Graph c), in contrast, leads to a different dissimilarity matrix over any choice of split times and $\beta$ as long as $\beta$ is not 0. 
 \label{fig:indistinguishability}}
\end{center}
\end{figure}

In the limit of an infinite number of linguistic features, there exists a ``true'' matrix induced by the history-of-change graph. But in practice we have to work with  matrices estimated from a finite amount of data. To achieve a reasonable estimate, we need many individual features, which in practice requires automatic methods for inferring both phonetic and lexical similarity. Automatic methods incur an inherent error at the level of individual features, but processing many features results in reasonably stable estimates of similarity matrices. Specifically, in cross-validation, we show that however we divide our features into two halves, the halves can predict each other with great success, i.e. carry very similar information.

Furthermore, an important technical detail is that our automatic lexical-similarity recognition operates on word-to-word phonetic similarity scores. This dependence amplifies the correlation between the lexical and phonetic similarities, thereby working in favour of the null hypothesis. This way, we can be confident that a rejection of the null hypothesis would correspond to a true real-world difference in `structure'. Finally, automatic cognate detection cannot distinguish between cognates inherited from the last common ancestor and words that have been borrowed between sister branches after divergence. For recovering the true tree part of the language-family history, this is a drawback; for us, this is a benefit, as it captures language contact relations in the lexical-change data, just as it affects the phonetic-change data.

Our data come from one of the largest existing historical-linguistic datasets, NorthEuraLex v0.9 \cite{NorthEuraLex}, which stores phonetic transcriptions of words expressing 1016 different meanings in over a hundred languages. We focus on the 36 Indo-European languages in NorthEuraLex, for which we computed measures of both phonetic and lexical dissimilarity using a state-of-the-art method \cite{Dellert-2018}, as discussed in more detail in Sec.~\ref{sec:methodsling}. 

\begin{figure}[!ht]
\begin{center}
\scalebox{0.8}{
\begin{tabular}{c|ccc|ccc|ccc}
\multicolumn{10}{c}{\bf \scalebox{1.25}{(a)}}\\[6pt]
\bf Meaning	& \bf EN 	&\bf DA 	&\bf DE 	&\bf ph:EN-DA	&\bf ph:EN-DE 	&\bf ph:DA-DE 	&\bf le:EN-DA	&\bf le:EN-DE	&\bf le:DA-DE\\\hline
BEARD	& beard	& sk{\ae}g	& Bart	& 0.27	& 0.523	& 0.27	& no	& yes	& no\\
HEAD	& head	& hoved	& Kopf	& 0.541	& 0.068	& 0.34	& yes	& no	& no\\
GREEN	& green	& gr\o{}n	& gr\"{u}n	& 0.436	& 0.553	& 0.687	& yes	& yes	& yes\\
DOG	& dog	& hund	& Hund	& 0.234	& 0.119	& 0.697	& no	& no	& yes\\
FIRE	& fire	& ild	& Feuer	& 0.179	& 0.539	& 0.166	& no	& yes	& no\\
COME	& come	& komme	& kommen	& 0.725	& 0.585	& 0.712	& yes	& yes	& yes\\
THROW	& throw	& kaste	& werfen	& 0.232	& 0.231	& 0.182	& no	& no	& no\\
SLEEP	& sleep	& sove	& schlafen	& 0.277	& 0.338	& 0.288	& no	& yes	& no\\
DRINK	& drink	& drikke	& trinken	& 0.476	& 0.566	& 0.491	& yes	& yes	& yes\\
TAKE	& take	& tage	& nehmen	& 0.401	& 0.302	& 0.183	& yes	& no	& no\\
\end{tabular}
}
\\[0.2in]
\begin{tabular}{c|ccc}
\multicolumn{4}{c}{\bf (b) phonetic from 10 meanings}\\[4pt]
phon & EN & DA & DE \\\hline
ENG	& 1.000	& 0.377	& 0.382 \\
DAN	& 0.377	& 1.000	& 0.402 \\
DEU	& 0.382	& 0.402	& 1.000\\
\end{tabular}
\hspace{0.4in} 
\begin{tabular}{c|ccc}
\multicolumn{4}{c}{\bf (c) lexical from 10 meanings}\\[4pt]
lex & EN & DA & DE \\\hline
EN &	1.000	& 0.500	& 0.600 \\
DA &	0.500	& 1.000	& 0.400 \\
DE &	0.600	& 0.400	& 1.000 \\
\end{tabular}
\\[0.2in]
\begin{tabular}{c|ccc}
\multicolumn{4}{c}{\bf (d) phonetic from full data}\\[4pt]
phon & EN & DA & DE \\\hline
EN	& 1.000	& 0.312	& 0.318 \\
DA	& 0.312	& 1.000	& 0.458 \\
DE	& 0.318	& 0.458	& 1.000 \\
\end{tabular}
\hspace{0.4in} 
\begin{tabular}{c|ccc}
\multicolumn{4}{c}{\bf (e) lexical from full data}\\[4pt]
lex & EN & DA & DE \\\hline
EN	& 1.000	& 0.399	& 0.415 \\
DA	& 0.399	& 1.000	& 0.593 \\
DE	& 0.415	& 0.593	& 1.000 \\
\end{tabular}
\caption{{\bf (a)}: Phonetic and lexical word-to-word similarities for 10 meanings in English (EN), Danish (DA) and German (DE). Columns \textbf{ph:A-B} list inferred phonetic similarity between languages A and B accounting for regular sound correspondences. Columns \textbf{le:A-B} list word-to-word lexical similarity, namely cognacy (i.e.~descending from the same ancestral word), inferred via clustering based on phonetic similarity scores as in \textbf{ph:A-B}. See Sec.~\ref{sec:methodsling} for details of the measures. {\bf (b),(c)}: The phonetic and lexical similarity matrices induced by the data in (a). {\bf (d), (e)}: Same, induced by the full data.}
\label{fig:ling-ex-tables}
\end{center}
\end{figure}

Tables in Figure~\ref{fig:ling-ex-tables}  illustrate how the phonetic and lexical similarities work in practice, using 10 meanings and words for their expression in English, Danish and German. Historical linguists have established that English and German have a more recent common ancestor than either has with Danish (i.e.~in the vertical `backbone tree'). After their respective divergences, the three languages experienced complex language-contact patterns, which can be conceptualised as horizontal edges in their historical graphs.

We discuss three rows in Figure~\ref{fig:ling-ex-tables} to illustrate the individual-feature diversity behind the general similarities. The meaning BEARD in Figure~\ref{fig:ling-ex-tables}a is straightforward. The English and German words descend from the same ancestral word, experiencing no lexical-replacement events. In linguistic parlance, these words are cognates. This common ancestry means that their phonetic shape descends from one and the same shape of the ancestral word, and thus causes them to be phonetically similar in the real world. Our phonetic-similarity algorithm correctly recovers that information (column ph:EN-DE), and based on this high level of similarity, our lexical cognate-detection algorithm also declares them lexically similar (``yes'', or 1, in le:EN-DE). The Danish word for BEARD is historically unrelated, and is indeed not similar to either German or English word phonetically. For the meaning HEAD, it is the English and Danish words that are true cognates, and are inferred to be both phonetically and lexically similar by the algorithms. But the German word for HEAD is accidentally much closer phonetically to the Danish word than to the English one. The more the phonetic systems of two languages are alike, the more frequent and pronounced such accidental similarities of unrelated words will be on average. Finally, the meaning GREEN illustrates a different property of phonetic similarity. For GREEN, all three languages use cognate words, and are correctly inferred to do so by the lexical algorithm. But their phonetic similarities to each other differ. The English vowel in `green' is closer to that in German `gr\"{u}n'  than Danish `gr\o{}n'. For cognate words, phonetic similarity depends on how sound changes operated on what was initially one and the same word. 

Schematically, our automatically inferred similarity measures $Phon$ and $Lex$ can be characterised as follows. $Lex = TrueLex + \epsilon_{lex}$, where $TrueLex$ is the true lexical similarity, and $\epsilon_{lex}$ is the error from the automatic cognacy-detection algorithm. $Phon = TrueLex*unrelated.phon.sim + (1-TrueLex)*cognate.phon.sim + \epsilon_{phon}$, where we condition on whether the relevant words are related (with $TrueLex$ impressionistically viewed as probability) or not; $unrelated.phon.sim$ and $cognate.phon.sim$ are two types of phonetic word-to-word similarities discussed in the previous paragraph; and $\epsilon_{phon}$ the error of the phonetic algorithm. This informal representation shows how our empirical measures $Phon$ and $Lex$ are even more dependent than the real-world quantities $TrueLex$, $unrelated.phon.sim$ and $cognate.phon.sim$. In turn, the latter three are also dependent because they are the result of language-change processes in the same speaker communities. 

\begin{figure}[!htbp]
\begin{center}
 \centerline{\includegraphics[width=1\columnwidth]{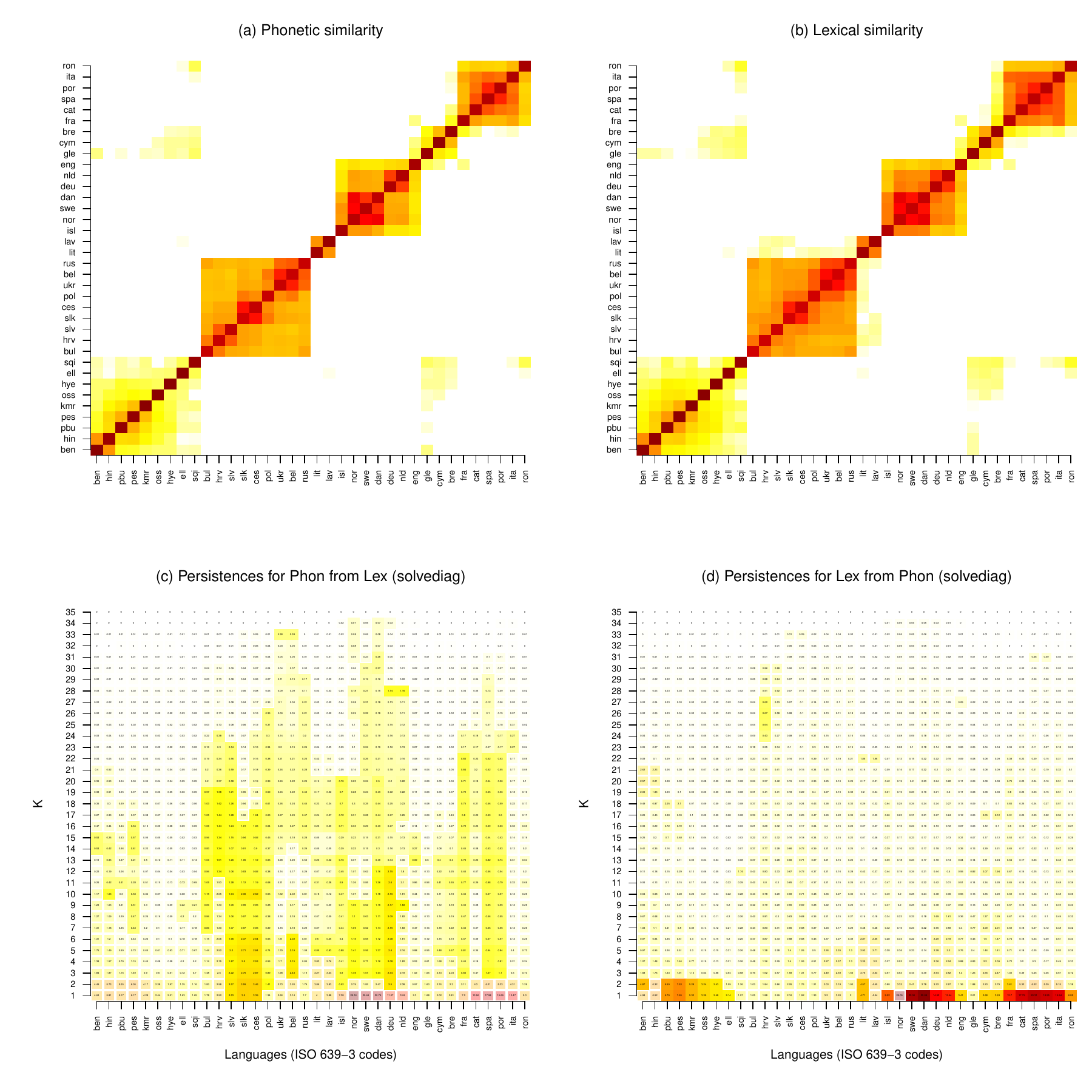}}
 \caption{(a-b) Phonetic and Lexical similarity matrices. The three large orange-red clusters correspond, left to right, to the Slavic, Germanic, and Romance language subfamilies.
(c-d) persistence diagrams. Statistical significance for cells is visually highlighted by color saturation and by the size of the background rectangle. Compare the columns for \texttt{oss} (Ossetian) and \texttt{slv} (Slovenian) in (c). The column for \texttt{oss} has no significant cells. The column for \texttt{slv} has significant cells up to $k=23$ and non-significant ones higher than that. \label{fig:linguistic}}
\end{center}
\end{figure}

Figure \ref{fig:linguistic}a-b shows our similarity matrices for Phonetic and Lexical, which look very much alike. Figure \ref{fig:linguistic}c-d shows the CLARITY analysis, which uncovers structural differences between the matrices in (a-b). The differences between Phonetic and Lexical are  declared \emph{statistically significant} by the CLARITY hypothesis-testing procedure based on resampling. In the figure, significant persistences are marked by a brighter colour and a larger rectangle. They only concern one direction of prediction, namely Phon from Lex, and only some groups of languages; the most affected ones are the Slavic and Scandinavian subfamilies. The differences are also \emph{scientifically significant}, which we determined by analysing the residuals at individual $k$, shown in Supp. Fig. 2: outstanding residuals remained on the order of 0.1-0.2 s.d. (computed from all similarities in the matrix) in individual cells even at the highest $k$s. We believe this is a moderate, yet considerable difference. Our additional checks also included establishing that CLARITY decomposition captures signal rather than noise up to the maximal $k$ (Supp. Fig. 3); checking whether self-similarity affects inference (Supp. Fig. 4) examining significances at a stricter p=0.01 (Supp. Fig. 5) and examining the persistence curves for the resampled matrices to make sure there were no anomalies (Supp. Fig 6-7). We conclude that the effect CLARITY finds, captured in the visual summary in Figure~\ref{fig:linguistic}c-d, is a real one. 

This result obtained by CLARITY is striking because it is based on very subtle distinctions in the observed data. To the bare eye, the Phonetic and Lexical distance matrices Figure~\ref{fig:linguistic}a-b are quite similar, 
because the two sets of features are not independent of each other. Despite high correlation, CLARITY allowed us to discover a clear difference between the two processes of language change. 

How should we interpret this finding? In terms of the informal representation above, our results about $Lex$ and $Phon$ indicates that the real-world quantities $TrueLex$ on the one hand, and $unrelated.phon.sim$ and $cognate.phon.sim$ on the other, are affected by subtly different historical processes. Given that $Phon$ depends on a superset of real-world quantities that $Lex$ depends on, it is not surprising that we only find the effect in the direction of predicting $Phon$ from $Lex$. The difference between the two is considerable, but is only discovered with statistical significance for groups of languages closely related to each other, such as the Slavic and Scandinavian languages. This might be a real-world phenomenon: perhaps in the long run, the phonetic and lexical change processes average out and start looking very similar). It could also be an artefact of our algorithms for estimating similarity: as all computational-historical-linguistic algorithms, they inherently tend to be more accurate for more closely related languages, so it might be that we  see significant mismatches between Phonetic and Lexical only where our estimates can be sharp enough. We leave solving this question to future research. 

We conclude this case study with the following strong thesis: qualitative linguistic research into small sets of linguistic features should refrain from generalising its results to the overall workings of different types of change. The statistical mismatch between Phonetic and Lexical that we found was subtle and required a large dataset to be discovered. 

\subsection{Predicting Culture from Economics}
\label{sec:culture}

In this example, we use the `structure' from economic properties of countries to predict cultural values. The purpose of this case study is to demonstrate how exploratory analysis with CLARITY can be performed on the level of individual anomalous residuals. 
Overall, country-level economics and culture similarity matrices are significantly different, as is to be expected. But examining the CLARITY residual structure, we identify particularly interesting anomalies.

\begin{figure}[!htbp]
\begin{center}
 \centerline{\includegraphics[width=1.0\columnwidth]{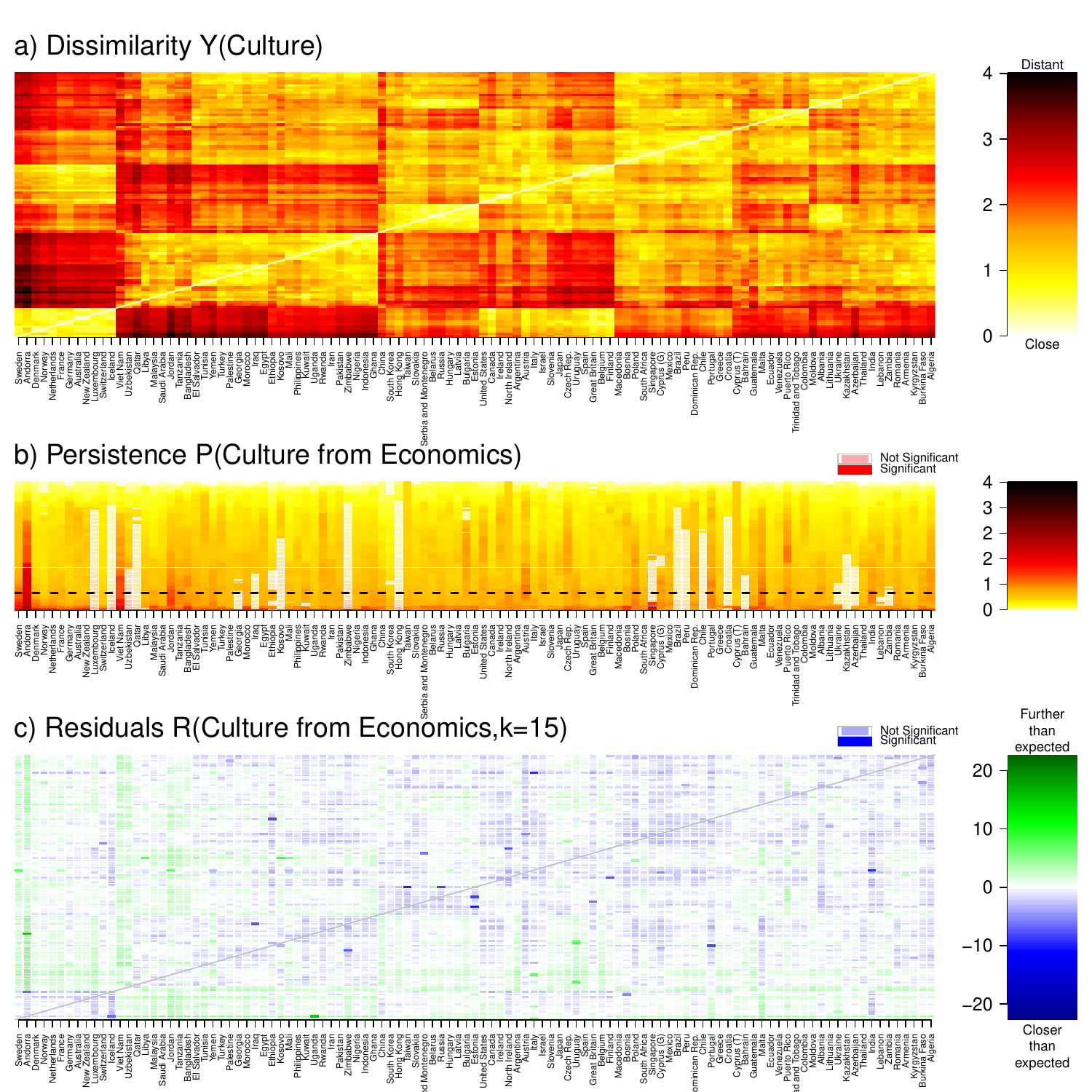}}
 \caption{Predicting economic properties  from cultural properties. (a)  The dissimilarity $Y_2$ for Culture. The order of countries is determined by clustering based on the dissimilarity, so countries that are close on the x-axis are, other things being equal, close to each other in the Culture dataset. (b)  Persistence of Culture predicted from Economics. The dashed line indicates the complexity $k$ used in (c). (c) Residuals for Culture predicted from Economics at $k=15$.}
\label{fig:culture}
\end{center}
\end{figure}

For Economics, the World Bank \cite{CIA2018} provides a range of  features primarily relating to wealth, inequality, trade and the like for over 200 countries. The World and European Values Surveys (WEVS) \cite{WVS2017, EVS2011} provides features on Culture - that is, people's attitudes and beliefs regarding topics like religion, prosociality, openness to out-groups, justifiability of homosexuality, political engagement and trust in national institutions. There are 104 countries shared between these data that represent our subjects for this case study. The `Cultural' dissimilarity matrix, Figure~\ref{fig:culture}a is constructed from a dimensionality-reduced dataset of nine cultural factors from WEVS from circa 2000CE, as described by \cite{ruck2018religious}. For `Economics', we retained  the 284 indicators of the World Bank dataset with less than 40\% missingness, standardised to unit variance, capped extreme values at 10 s.d.-s, mean imputed, and computed the `Economic' pairwise distance matrix.  The raw dissimilarity matrices are shown in Supp. Fig. 8.

Cultural values are known to predict economic outcomes such as GDP per capita \cite{Gorodnichenko2016, ruck2018religious}, economic inequality \cite{Nikolaev2017} and the balance of agriculture-industrial-service sectors within the economy \cite{Inglehart2005a}. Conversely, a country's economic performance predicts cultural factors such as religiosity \cite{sinding_bentzen_acts_2019} and book writing \cite{bentley_books_2014}. CLARITY does not presuppose a causal model and therefore the choice of reference and target should not be interpreted as a causal claim without additional information.

Because we have only $9$ features for Culture, it cannot be used to identify persistent residuals in Economics with CLARITY as the maximum complexity is $k_{max}=\mathrm{min}(k,d)=9$. Instead we ask whether Culture (Figure \ref{fig:culture}a) can be predicted from Economic data, in which $k_{max}=d=104$. The persistence chart (Figure \ref{fig:culture}b) makes it clear that Culture is incompletely predicted from Economics, as almost all Persistences are significant. 
Conversely, not all residuals are significant. The residual matrix at a specific $k$ (Figure~\ref{fig:culture}c) identifies the most important mismatches between Culture and Economics, which may be worthy of further study using other methods and/or data. There are two main classes: anomalies, and clusters.

One anomaly is Andorra, a small European country in the Pyrenees between Spain and France, which belongs culturally, to the cluster of Scandinavian and economically strong Western European countries, Figure~\ref{fig:culture}a. But predicting Culture from Economics in Figure~\ref{fig:culture}c, we see highly negative (blue) residuals between Andorra and its cultural cluster. (Remember that absence of significance simply means we do not have \emph{enough} evidence to reject the null hypothesis based on the available data.)  Examining the raw data in Supp. Fig. 8, shows Andorra to be an Economic outlier, clustering with other territories with complex sovereignty that may influence data gathering: Taiwan, the Turkish region of Cyprus, and Northern Ireland. 

Now consider Vietnam and Uzbekistan. Their column of residuals in Figure~\ref{fig:culture}c are all green, meaning they are farther away on Culture than expected from Economics from all (!) other countries. These two countries are the only ones with all-green residual columns in Figure~\ref{fig:culture}c. Examining the raw data in Figure~\ref{fig:culture}a and Supp. Fig 8, as well as the Economic PCs in Supp. Fig. 9, we see that they are not particularly close in either Culture or Economics to any other countries, but that there is little relation between which countries are closer in either. Future research may thus want to first confirm that the cultural unusualness is not spurious (based, e.g.,~on coding errors and the like in the datasets as processed), and second, to study what could be driving these anomalous profiles. 

There are countries that share residual structure. Countries in Latin America, such as Argentina, Uruguay and Puerto Rico, have large European descended populations \cite{CIA2018a, Putterman2010}, so are culturally similar to Europe because cultural values percolate along linguistic and religious pathways \cite{Matthews2013, Matthews2016a, Spolaore2013}. However, Latin American countries have a relatively smaller GDP and have high economic inequality; they are therefore more culturally similar to Europe than Economics predict.

The Cultural data contain associations that are perhaps surprising; for example, Japan is culturally similar to the Czech Republic. Correcting for Economics, Figure~\ref{fig:culture}c, reveals that this cannot be fully explained away by Economic similarity (the relevant cell is light blue, though not significant). In contrast, Poland and the Czech Republic are neighbouring countries that cluster very differently in Culture, and have large persistences. In this case, Economics may be playing an even smaller role than in the comparison Japan-Czech Republic: the residual Poland-Czech Republic is significant and shows they are much farther from each other on Culture than we could expect from Economics.  

To conclude, we have described how CLARITY analysis identifies interesting anomalies in one dataset that stand out when making predictions based on the other dataset. In this case, the prediction is poor, so  considerable follow up analysis would be needed to draw strong conclusions. The assessment of statistical significance was of secondary importance: we concentrated on systematic patterns in the CLARITY residuals that are likely to be of scientific significance, and the presence of statistically significant individual cells only confirmed the insights. This strategy is particularly useful for datasets based on a relatively small number of features, which will often be the case in social sciences: the statistical power of such data would be in the general case limited. 

\subsection{Summary of the examples}
\label{sec:summary-of-cases}

We summarise the lessons from the four case studies above. In Section~\ref{sec:sim}, we show how CLARITY behaves on data from a simulation where we purposefully manipulated the `structure' of the compared datasets. This case study builds intuitions about what CLARITY output would look like in different real-world scenarios. In another simulation study in Section~\ref{sec:epigenetics}, we demonstrate that CLARITY correctly identifies anomalous links between two datasets generated by an existing epigenetic model, that did not conform to our definition of Structure. In the linguistic study on real-world data in Section~\ref{sec:languagechange}, we report the first-of-its-kind quantitative finding that phonetic and lexical language change operates in subtly different ways, though both occur in the same human communities and are subject to similar constraining factors. This example is the only one  in this paper where explicit hypothesis testing is of primary importance. Finally, in the example on culture and economics in Section~\ref{sec:culture}, we illustrated how CLARITY residuals can be explored to detect interesting anomalies that can then be selected for further in-depth study. 

Overall, we intend this set of examples to show how CLARITY can be a versatile tool for both exploratory data analysis and for explicit hypothesis testing, allowing us to make subtle and fine-grained comparisons between paired datasets stemming from the same subjects, but consisting of very different features, sometimes generated within different scientific disciplines. In the next section, we further discuss best practice and provide advice for interpretation of CLARITY results. 

\section{Discussion}

\subsection{Best Practice}
\label{sec:bestpractice}

CLARITY is simple to use and has very few moving parts. There are however a few critical decisions:
\begin{enumerate}
\item designing the input (dis)similarity $Y$, for which the user should take care;
\item choosing a structure representation $A$, for which the defaults provide good performance;
\item whether to use null hypothesis statistical testing, to identify statistically significant deviations;
\item how to interpret the results.
\end{enumerate}

\subsubsection{How to choose a (dis)similarity}

CLARITY can work with rich input data $Y$. However, the simpler the measure, the more reliable the inference will be. Therefore, if you can compute a regular covariance or Euclidean distance, this will be more interpretable.

The \emph{self-similarity} should be correctly quantified. Broadly, this should mean, `if we removed any excess similarity that is unique to the subject, how similar would it be to itself?' We implemented a `diagonal removal' for this purpose, which sets the diagonal to the next-highest value. This is generally recommended, and was applied in the simulations, Methylation, and Culture examples. We also provide a (slower) iterative model in which $\hat{Y}=AX A^T+D$ is iteratively solved for a diagonal matrix $D$. This was used in the Linguistic example. We did not find any qualitative difference between the methods. Sensitivity analysis will confirm for users whether working with the raw similarity, or diagonal-corrected similarity, imapacts inference.

Whether to \emph{centre} the data is an important decision. Centring changes whether any difference in mean is used in the comparison, and is therefore a modelling decision. We only recommend centring if the mean is known apriori to be unimportant or misleading. We only centred the Culture/Economics data, which required feature standardisation. Again, we recommend either careful justification of the choice to centre, or sensitivity analysis to confirm it does not matter. Conversely, \emph{scaling} will typically not affect the inference because any change can be accounted for in $X$.

We note a clear distinction between two classes of data. The first, represented in the language example, is where every subject is distinct, or the total amount of data is `small'.  In this case, self-similarity correction is important as it affects $A$ closely. Further, persistences have relatively small ranges of $k$ to persist over. and therefore inference is subtle. Conversely, the second class has a `large' number of subjects, such as the Methylation example. In this case the effect of the diagonal becomes negligible as the data grows, as can be verified in a sensitivity analysis.

\subsubsection{How to choose the structure}

In general, the unconstrained SVD-based structure is recommended; it scales to large data, and provides the best overall fit. Although the representation of structure $A$ is a linear embedding using only the SVD, it is flexible in prediction of $Y_2$ due to using $k^2$ free parameters in $X_2$.

We only recommend the mixture model when the data are small, and suspected to lie very close to some interpretable model such as a tree containing mixtures. The residuals should be close to identical between the methods, with the important difference being the interpretation of $A$. We note that unlike in many mixture model inference methods, there is nothing in our loss function that encourages $A$ to be close to the boundaries (i.e. some $A_{ij}$ are close to $0$ or $1$), except the initial conditions. This feature should therefore be used with further validation, as part of the hypothesis generation process.

\subsubsection{Whether to perform null hypothesis statistical testing}

CLARITY is in general a hypothesis generating tool and much of the value can be obtained without any use of statistical testing. For many datasets, statistical power will dominate (such as the Culture example) and many persistences are expected to be significant. We then care about practical significance and hypothesis testing is somewhat spurious. Similarly, in the Methylation example, clear persistent residuals were observed and these require no testing to confirm anomalies.

Conversely, especially with small datasets, statistical power is limiting (such as the Language example), in which case we wish to check that a given persistence is significant. The statistical test that we provide compares $Y_2$ to the distribution of left-out data from $Y_1$. It does so by first mean centering and scaling each dataset \emph{to the reference}, and then applying the standard CLARITY correction of learning independent Relationships. This encodes an implicit assumption that the signal and noise are of the same scale, and therefore should only be used if this is plausible.

Null hypothesis testing is a non-trivial process to implement because it requires that the user is able to either create independent features, or able to provide a set of bootstrapped samples generated on pseudo-independent samples, for both the reference and the test dataset. This comes with a computational cost, as of the order of 200 seudo-independent replicates are required to confidently reject at the 0.05 level.

We emphasise that the principal use case is the identification of large persistent residuals which CLARITY will only create where structure has changed, and does not require testing.

\subsubsection{How to interpret persistences and residuals}

We have noted that high residuals at a single complexity $k$ might capture only a small change in the importance of a particular relationship, such as two clusters getting further apart. Conversely, CLARITY is easiest to interpret when residuals are large and persist across a range of complexities. This will typically highlight structural anomalies in certain subjects.

There are two phenomena that require care. The first is that  the sum of squared residuals are not monotonic for a particular subject (but are overall). For example, Figure \ref{fig:sim1}c(iii) shows residuals increasing for subject i up to complexity 3, and then decreasing; similarly, subject l has high residual at complexity 12-13. This occurs when the change to the structure fits other, nearby but different, subjects which can `drag' that subject away from its target.

The second important phenomenon is that residuals need not be largest in the subject that has changed structure. Because the inferred relationship $X_2$ is chosen to minimise the total squared residuals, the model may make nearby subjects have the highest residuals. The important thing to note is that this still creates some residual excess in the target subject, which can be identified by looking at the matrix of residuals (e.g. Figure \ref{fig:culture}c).

\subsection{Conclusion}
\label{sec:discussion}

CLARITY can be applied to any pair of datasets in which subjects are matched and so can be used in a wide variety of situations. We demonstrated this in very different fields: epigenetics, linguistics and bridging sociology and economics. In these examples we have recovered differences, supported by well-documented evidence, and generated new hypotheses.

The software requires very little technical knowledge to employ, there are no tuning parameters, and the output can be presented in a simple, interpretable chart we called the residual persistence. This identifies the clusters and subjects that are poorly predicted, and allows interpretation of which other clusters they may share additional structure with. We suggest that the same approach may yield valuable insights when applied to other fields of interest, and that the results will generate hypotheses for further investigation through the application of additional, statistically robust, methods.

CLARITY is fast, and, for prediction, is limited only by the cost of computing a Singular Value Decomposition. We showed via simulation that the SVD approach is representing the structure in the data very similarly to a mixture model, for which we presented a novel algorithm based on a multiplicative update rule. The mixture model correctly identifies hierarchical structure, clusters and mixtures when these are present in the data and so permits the interrogation of why a particular prediction may have been made. 
We were unable to find tools that were able to perform an analogous structural comparison and, therefore, have not performed statistical recall and efficiency benchmarking. Whilst we could have run the models listed in the introduction, they have different null hypotheses and purposes. Some provide qualitatively different information to CLARITY, while others test for equality of the similarities, which is an implausible null hypothesis for our examples. Whilst CLARITY is currently performing a unique function in terms of information extraction from complex data, we anticipate that the problem may be addressed in other ways, and that the insights which can be automatically extracted can be extended.

\section{Methods}

\label{sec:methods}

\subsection{Notation} The notation we use is largely standard. Matrices are denoted by upper case letters. The set of all $d \times k$ matrices with real entries is denoted by $\R^{d \times k}$. If $Y \in \R^{d \times k}$ is a matrix, its $(i,j)$-entry is denoted by $Y_{ij}$. The quantity $\norm{Y}_{F}$ denotes the Frobenius norm of $Y$, i.e.
\[ \norm{Y}_{F} := \left(\sum_{i = 1}^{d} \sum_{j = 1}^{k} Y_{ij}^{2} \right)^{1/2}.\] 

\subsection{Structural Representation}
\label{sec:methodsmath}
In general, CLARITY can be applied to any pairs of matrices. In practice, the utility of the subsequent results does depend on how the matrix was constructed and the `best practice' input is likely to be a distance matrix corresponding to a reasonable metric, such as Euclidean. In limited experimentation, asymmetry does not appear to be a major problem but the class of matrices we can prove that CLARITY is sensible for is the following.

A dissimilarity matrix is defined to be any symmetric matrix $Y \in \R^{d \times d}$ of full rank consisting of non-negative entries. Let $Y_{1}$ and $Y_{2}$ be a pair of dissimilarity matrices in $\R^{d \times d}$. For each natural number $k \leq d$, we initially seek matrices $A_{k} \in \R^{d \times k}$ and $X_{1}^{(k)} \in \R^{k \times k}$ such that the quantity
\[ \norm{Y_{1} - A_{k}X_{1}^{(k)}A_{k}^{T}}_{F} \] is minimised. Note that the squared error discussed in the text is the squared Frobenius norm and is minimised at the same $A$ and $X$.

The product $A_{k} X_{1}^{(k)} A_{k}^{T}$ is to be viewed as the best rank $k$ approximation of $Y_{1}$ in Frobenius norm subject to whatever constraints may be placed on both $A_{k}$ and $X^{(k)}_{1}$ and it affords a structural reduction of $Y_{1}$ at dimension $k$ as motivated by the following proposition.

\begin{prop} 
\label{prop:minproj} Let $Y \in \R^{d \times d}$ be a dissimilarity matrix and let $(A, X)$ be a pair of matrices such that 
\[ \norm{Y - AXA^{T}}_{F} \] is minimised, where $A$ has full column rank. Then 
\[ X = P_{A}YP_{A} \] where $P_{A}$ denotes the orthogonal projection operator onto $\im(A)$. 
\end{prop}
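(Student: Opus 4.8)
The plan is to observe that, although the infimum is taken jointly over $A$ and $X$, the conclusion is purely a statement about the optimal $X$ \emph{for the given} $A$: if $(A,X)$ minimises $\norm{Y - AXA^T}_F$, then in particular $X$ minimises the map $X \mapsto \norm{Y - AXA^T}_F$ with $A$ held fixed. Since $\mathcal{L}\colon X \mapsto AXA^T$ is linear, this objective is a convex quadratic in $X$, so the first-order stationarity condition is both necessary and sufficient to characterise the global minimiser in $X$. I will therefore derive that condition, solve it explicitly, and substitute back.

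First I would compute the adjoint of $\mathcal{L}$ with respect to the Frobenius inner product $\langle M, N\rangle = \Tr(M^T N)$. A short calculation gives
\[ \langle AXA^T, M\rangle = \Tr\!\big(A X^T A^T M\big) = \Tr\!\big(X^T (A^T M A)\big) = \langle X, A^T M A\rangle, \]
so $\mathcal{L}^{\ast}(M) = A^T M A$. Setting the gradient of $\tfrac12\norm{Y - \mathcal{L}(X)}_F^2$ to zero yields the normal equation $\mathcal{L}^{\ast}\big(Y - AXA^T\big) = 0$, i.e.
\[ A^T Y A = (A^T A)\, X\, (A^T A). \]
Here I would use that $Y = Y^T$ to keep the expression symmetric and to confirm that the solution is itself symmetric, as one expects given the symmetry of the data.

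Because $A$ has full column rank, the Gram matrix $G := A^T A \in \R^{k\times k}$ is invertible, so the normal equation solves uniquely as $X = G^{-1} A^T Y A\, G^{-1}$. Substituting back,
\[ AXA^T = A G^{-1} A^T\, Y\, A G^{-1} A^T = P_A\, Y\, P_A, \]
where I recognise $P_A = A(A^T A)^{-1} A^T$ as the orthogonal projection onto $\im(A)$ (symmetric, idempotent, and fixing $\im(A)$). This is the precise, dimensionally consistent form of the claimed identity: the optimal reconstruction $AXA^T$ equals $P_A Y P_A$, and since $\mathcal{L}$ is injective on $\R^{k\times k}$ when $A$ has full column rank, $X$ is the unique matrix producing this reconstruction.

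I do not expect a genuine obstacle here; the work is entirely in getting the linear-algebra bookkeeping right. The two points that most need care are the adjoint/gradient computation — where the symmetry $Y = Y^T$ must be used to avoid spurious asymmetric terms — and the verification that $P_A$ as written is exactly the orthogonal projector onto $\im(A)$, which is what lets the final substitution collapse to $P_A Y P_A$.
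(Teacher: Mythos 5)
Your proof is correct and follows essentially the same route as the paper's: derive the stationarity condition $A^{T}(AXA^{T}-Y)A=0$, use full column rank to invert the Gram matrix $A^{T}A$, solve $X=A^{+}Y(A^{+})^{T}$, and substitute back to obtain $AXA^{T}=P_{A}YP_{A}$. You additionally supply the convexity argument showing the stationary point is the global minimiser in $X$ for fixed $A$, and you correctly observe that the proposition's displayed identity $X=P_{A}YP_{A}$ is only dimensionally consistent when read as $AXA^{T}=P_{A}YP_{A}$, which is precisely what the paper's own proof establishes.
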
 

\begin{proof} Define the objective function 
\[ \curly{L}(A, X) := \frac{1}{2} \norm{Y - AXA^{T}}^{2}_{F} \] Taking matrix derivatives with respect to $X$ gives the condition
\[ A^{T}(AXA^{T} - Y)A  = 0 \] at a critical point $(A, X)$. If $A$ has full column rank, the matrix $A^{T}A$ is invertible and it is possible to solve for $X$ by
\[ X = A^{+}Y(A^{+})^{T} \] where $A^{+} := (A^{T}A)^{-1}A^{T}$ is the generalised (Moore-Penrose) inverse of $A$. Then
\[ AXA^{T} = AA^{+}Y(AA^{+})^{T} = P_{A}YP_{A} \]
\end{proof}

Given the above structural reduction of $Y_{1}$, we seek to find the extent to which it is capable of predicting the matrix $Y_{2}$. To this end, we find a matrix $X_{2}^{(k)}$ such that 
\[ \norm{Y_{2} - A_{k}X_{2}^{(k)}A_{k}^{T}}_{F} \] is minimised and we examine both the residual matrix 
\[ Y_{2} - A_{k}X_{2}^{(k)}A_{k}^{T} \] and element-wise norms of it. 

If $A_{k}$ has full column rank, the argument in Proposition \ref{prop:minproj} gives that $X_{2}^{(k)} = P_{A_{k}}Y_{2}P_{A_{k}}$ where $P_{A_{k}}$ denotes the orthogonal projection onto $\im(P_{A_{k}})$. 

\subsection{Learning structure}
\label{sec:learning}
We consider two methods that differ only in the manner in which the initial optimisation problem stated above is solved. Our \emph{SVD model} uses singular value decomposition to solve analytically for $A_{k}$ and $X^{(k)}_{1}$, and it is possible to do this precisely because these matrices are assumed to be unconstrained. Our \emph{Mixture model} constrains the matrix $A_{k}$ to have rows taken from a probability simplex (but does not constrain $X^{(k)}_{1}$), and an optimum is obtained numerically via an iterative procedure. 

\subsubsection{SVD-based solution}
Suppose that we have singular value decomposition 
\[ Y_{1} = \sum_{j = 1}^{d} \sigma_{j}u_{j}v_{j}^{T} \] where $\sigma_{j}$ denotes the $j$-largest singular value of $Y_{1}$. The matrix product $A_{k}X_{1}^{(k)}A_{k}^{T}$ can have rank at most $k$, and by the Eckart-Young theorem \cite{Eckartapproximationonematrix1936},
\[ \min_{Y': \rk(Y') \leq k} \norm{Y - Y'}_{F} = \norm{Y - \hat{Y}_{1,k}} \] where $\hat{Y}_{1,k}$ is defined to be the truncation of the SVD of $Y_{1}$ to its top $k$ singular values, i.e. 
\[ \hat{Y}_{1, k} := \sum_{j = 1}^{k} \sigma_{j}u_{j}v_{j}^{T}. \]
We set $A_{k} = [u_{1}|u_{2}| \dots| u_{k}]$. The matrix $X_{1}^{(k)}$ is then the top left-hand $k \times k$ block of $\Sigma_{1}$, and 
\[ X_{2}^{(k)} = A_{k}^{T}Y_{2}A_{k}. \] 

\subsubsection{Solution under a simplicial constraint}
We assume that the entries of $A_{k}$ are non-negative and that the rows of $A_{k}$ sum to $1$. This constraint is motivated by mixture modelling. A solution is sought via an iterative gradient descent procedure using multiplicative update rules based on the approach of \cite{lee_algorithms_2001-1}.

Specifically, we derive a multiplicative update rule for $A$ given $X$ and $Y_{1}$ and then solve for $X$ given $A$ and $Y_{1}$. These two steps are applied to convergence. This particular model does not appear to have been solved previously in the literature, and this solution is relatively efficient.

Given $X$ and $Y_{1}$, consider the objective function
\[ \curly{L}(A) := \frac{1}{2}\norm{Y_{1} - AXA^T}^{2}_{F}. \] 
If $A_{t}$ is the current estimate of $A$ at iteration $t$, taking matrix derivatives of $\curly{L}(A)$ leads to the update rule 
\[ (A_{t+1})_{ij} \leftarrow (A_{t})_{ij} \frac{(N^{A}_{t})_{ij}}{(D^{A}_{t})_{ij}} \] where 
\[ N^{A}_{t} = Y_{1}^{T}A_{t}X_{t} + Y_{1}A_{t}X_{t}^{T} \] and
\[ D^{A}_{t} = A_{t} X_{t} A_{t}^T A_{t} X_{t}^T + A_{t} X_{t}^T A_{t}^T A_{t} X_{t} \] and $X_{t}$ denotes the estimate of $X$ at the $t$-th iteration. If $A_{t+1}$ has full column rank, then we solve for $X_{t+1}$ by use of the generalised inverse; i.e., 
\[ X_{t+1} = A_{t+1}^{+}Y_{1}(A_{t+1}^{+})^{T} \] If $A_{t+1}$ does not have full column rank, a multiplicative update rule is used to update $X_{t}$ derived analogously, i.e. 
\[ (X_{t+1})_{ij} = (X_{t})_{ij} \frac{(N^{X}_{t})_{ij}}{(D^{X}_{t})_{ij}} \] where 
\[ N^{X}_{t} := A_{t+1}^{T}Y_{1}A_{t+1} \] and 
\[ D^{X}_{t}:= A_{t+1}^{T}A_{t+1}X_{t}A_{t+1}^{T}A_{t+1} \]  

Empirically, the row-sums are approximately stable in this algorithm, but it does \emph{not} guarantee that the rows sum to $1$. Therefore, at each iteration we renormalise the rows to enforce this property. The row sums are not in general identifiable. In practice, disabling this normalisation does not allow the row sums to drift significantly, except in cases where the model is a very poor approximation to the data. Poor model fit may cause the algorithm to terminate because it cannot find the local optima.

The following algorithm describes this rule, using $\circ$ to denote the entry-wise product of two matrices.

\begin{algorithmic}
\STATE {\bf Algorithm 1}
\STATE Inputs: Data $Y$, initial value of $A_0$; maximum number of iterations $t_{max}$
\FOR{$t=1 \ldots t_{max}$}
\STATE $A_{t} = \mathrm{NormaliseRows} \left(A_{t - 1} \circ \frac{N^{A}_{t-1}}{D^{A}_{t-1}} \right)$
\IF{$A$ has full column rank}
\STATE $X_{t} = A_{t-1}^{+}Y_{1}(A^{+}_{t-1})^{T}$
\ELSE
\STATE $X_{t} = X_{t-1} \circ \frac{N^{X}_{t-1}}{D^{X}_{t-1}}$
\ENDIF
\STATE {\bf if} $\norm{A_{t}-A_{t-1}}< \delta$ {\bf then} Break
\ENDFOR
\STATE Outputs: Estimates $A=A_{t}$ and $X=X_{y}$
\end{algorithmic}

\subsection{Statistical significance}
\label{sec:pvals}

For simple datasets consisting of $d$ subjects about which we observe $L$ features, significance is measured using a statistical resampling procedure implemented in the CLARITY package.
More complex datasets where similarities are computed in a complex way, and not read straightforwardly off matches between features -- for example, as for our linguistic data -- can still be
quantified via resampling. In such cases the data are bootstrapped externally and provided to the software as a set of matrices.
In this procedure, we sample $L/2$ of the $L$ features (columns) of the data $D_1$, which is a $d \times L$ matrix. We then compute a `sampled reference' (dis)similarity matrix, and from the remaining $L/2$, a `sampled target' (dis)similarity matrix.
We then replicate the downsampling procedure on the target data and obtain a (dis)similarity matrix.
We then mean centre and scale both sampled target and downsampled original target matrices into the sampled reference matrix,
and evaluate test statistics $f$ (squared residuals and persistences). This is repeated $n_{bs}$ times.

We compute a regularised empirical p-value $p(f(Y_2) |f(Y_1)) = \frac{1}{1+n_{bs}}(1 + \sum_{i=1}^{n_{bs}}\mathbb{I}( f(Y_2) \ge f(Y_1))$, formed from the probability that a sample from the true target is as great or geater than the resampled reference. This procedure is necessary because bootstrap resampling  \cite{efron_introduction_1994} is not straightforwardly valid for similarity matrices.

Whilst this procedure correctly estimates which structures of $Y_2$ are not predicted by $Y_1$, it does not distinguish between structures that are generated by signal vs noise. Because we are not interested in predicting noise, we further need to detect it. Estimating values of $k$ associated with structure is straight forward by simple cross-validation, because we have already constructed many random resamples of the data. We can therefore predict fold-2 of $Y_2$ from a CLARITY model learned in fold-1 of $Y_2$ and estimate $\hat{k}$ from the minimum cross-validation error.

Because $\hat{k}$ is a point estimate subject to variation, we obtain $n_{bs}$ samples $\hat{k}_i$ and implement a soft threshold, the `probability that complexity $k$ is describing structure' $p(k) = \frac{1}{n_{bs}} \sum_{i=1}^{n_{bs}}\mathbb{I}(\hat{k}_i \ge k)$, i.e. the proportion of bootstrap samples that have an estimate at least as large as $k$. We then report the complete CLARITY p-value $p_{c}$, quantifying the `probability of observing a test-statistic for $Y_2$, this extreme or more so, under the null hypothesis that $Y_2$ is a mean scaled (dimension $k$ rotation, translation and scaled) version of $Y_1$ with in which Complexity $k$ describes Structure.'. This is:
$$p_{c} = 1-\Big( \big(1-p(k)\big) p(f(Y_2) |f(Y_1)) \Big),$$
which is close to 0 only if both $p(k)$ is close to 1 and  $p(f(Y_2) |f(Y_1))$ is close to 0.

Because the p-values are highly correlated, and there is a multiple testing burden, the p-values should not be used to test for the presence of any difference in structure between $Y_1$ and $Y_2$. In particular, it can be that $Y_1$ and $Y_2$ are substantially different, but this does not result in any particular cell in the persistence diagram having a p-value at the appropriate multiple-testing level. In other words, on the level of $Y_1$ and $Y_2$ viewed globally, testing individual residuals has very low power; testing individual persistences has low power. A more powerful test would be for $f$ being the Frobenius norm of the whole matrix, but this has limited scientific value as it discards the scientific significance of the results.

We provide a formal statistical significance procedure to add weight to the identification of scientifically significant results, but emphasise that these are different concepts.

\subsection{Methylation/Expression Simulated data}
\label{sec:methysim}

For Figure \ref{fig:methy} we use the simulation data from \cite{gu2016complex} (Supplementary Figure S3) which is based on real methylation and expression patterns.

First each locus is assigned a methylation class, 300 are `hypo'methylated (low) and 700 are `hyper'methylated (high). They are then given a mean methylation, $\mu_i \sim U(0.1,0.4)$ and $\mu_i \sim U(0.55,0.85)$ for hypo/hyper respectively. Then `Tumor' cases are assigned methylation $m_{ij}\sim U(\mu_i-\mu_i/2, \mu_i+\mu_i/2)$ or $m_{ij}\sim U(\mu_i-(1-\mu_i)/2, \mu_i+(1-\mu_i)/2)$ for hypo/hyper. Control cases follow the same distribution but shift towards the mean by replacing $\mu_i$ with $\mu_i+0.2$ (if $\mu_i<0.3$), or $\mu_i-0.2$ (if $\mu_i>0.7$) or with equal probability of positive or negative shift $+/-0.2$ otherwise.

To create Expression data, the procedure generates $\sigma_i \sim U(0.5,0.9)$ for each locus and then sets $e_{ij}^\prime\sim N(\mu_{ij},\sigma_i^2)$. The reported expression $e_{ij}$ are then centred and scaled.

We then chose two segments of 10 Loci and moved the two classes towards each other; the top anomaly loci have `tumor' expression altered, and the bottom have `control' altered, by setting $e_{ij}^{anomaly}=-2e_{ij}$ . This simulates loci that behave differently in Methylation data to in Expression data. These parameters induce an average -5\% correlation between methylation and expression, i.e. the association is weak.

\subsection{Simulation details}
\label{sec:methodssim}



For Section \ref{sec:sim} we generate a coalescent tree $\mathcal{T}_1$ using `rcoal' from the package `ape' \cite{paradis2011package} for R \cite{Rproject}. The `true' $A$ is a vector of zeroes except for the cluster membership $k$ of $i$, for which $A_{ik}=1$. We then simulate a matrix $D_0$ consisting of $k$ rows (clusters) and  $L$ columns (features) by allowing features to drift in a correlated manner under a random-walk model using the function `rTraitCont' from the package `ape'. This generates a `true' $X=\dist(D_0)$. To generate a feature $d$ for a sample with mixture $a$, we simulate features $d~N(a^T D_o, \sigma_0^2)$, from which we can compute $Y=\dist(D)$.

In \emph{Scenario A} we make $\mathcal{T}_2$ into a non-ultrametric tree by randomly perturbing the branch lengths of $\mathcal{T}_1$ by multiplying each by a $U(0.1,2)$ variable. We generate $Y^{(2)}$ as above from $\mathcal{T}_2$.

In \emph{Scenario B} we make $\mathcal{T}_2$ as in Scenario A. Then one additional \emph{mixture edge} is added at random. This is done by choosing a tip of the tree $i$, choosing a second tip $j$ at least the median distance from the first tip, and setting $A[,i] \leftarrow (1-\beta) A[,j]$ and $A[,j]= A[,j] + \beta A[,i]$. This edge affects a proportion $r$ of the subjects in cluster $i$. If $r=0$ or $r=1$ this becomes a relationship change rather than a structural change, because all of the samples in the cluster adopt a new relationship with the remaining clusters (though the relationship is no longer a tree). We use $r=0.5$ throughout.

\subsection{Language example details}
\label{sec:methodsling}


We compute \textbf{phonetic similarities} using the \emph{Information-Weighted Distance with Sound Correspondences} (IWDSC) method \cite{Dellert-2018}. First, we estimate global sound similarity scores, based on the whole NorthEuraLex 0.9 dataset with 107 languages. This provides us with an idea of which sounds in the data generally tend to be close. Implicitly, the employed inference method makes those sounds close that appear in words that are likely historically related. In other words, global sound similarities are not directly about articulatory or auditory similarities (i.e.~how humans produce and perceive different sounds), but rather estimate ``historical similarity'', thus implicitly tracking processes of language change. 

After obtaining global sound similarities, we compute local sound similarity scores for each language-language pair in our 36-language Indo-European subset of NorthEuraLex. This works similarly to global similarity scores, but now only taking into account data from those two languages. Both global and local sound similarity scores are based on mutual information. In particular, the local scores declare  such sounds similar which are highly predictable from the sounds in the word expressing the same meaning in the other language. 

To obtain overall language-language Phonetic scores, we first build word-to-word similarity scores based on sound-to-sound similarity scores. Crucially, we discount the weight of the sounds in highly regular parts of words,
e.g. the infinitive ending in German verbs such as geb*en* ``to give'' and leb*en* ``to live''  \cite{Dellert-Buch-2018}. This way, we discount the regular grammatical elements: they carry information about the grammar of a language, but not about its individual words. We also normalise by word length. To get aggregate language-to-language similarities out of word-to-word similarities, we simply average. 
 
Language-to-language \textbf{lexical similarity} is defined as cognate overlap: the share of words in the relevant two languages that were inferred to have the same ancestral word. We produce automatic cognacy judgements by applying UPGMA clustering to the word-to-word phonetic similarity scores within each meaning, a method shown to currently produce state-of-the-art automatic cognacy judgements \cite{Dellert-2018}. 

Both Phonetic and Lexical similarities that we compute are based on word-to-word phonetic similarity scores. The cognate clustering step that takes us from word-to-word similarities to cognate overlap aims to uncover, automatically, information about the word-replacement change. Phonetic and Lexical information is bound to be \textbf{highly correlated}. First, the change of two types occurs in the same communities subject to the same historical processes. For example, both Phonetic and Lexical change accumulate with time, so two speech communities that split earlier will be more dissimilar on both Phonetic and Lexical change than two speech communities with a later split, other things being equal. Second, when two languages retain a common ancestral word, simply by virtue of stemming from the same proto-word, the two modern words are going to be more phonetically similar than two randomly selected phonetic sequences from the two languages. Thus higher levels of true cognate overlap will lead to higher levels of phonetic similarities. 

Finally, in addition to these two real-world drivers of correlation, in our computational analysis we infer lexical overlap based on low-level phonetic similarity. It is a common and effective practice in computational historical linguistics, and only slightly inferior to expert-coded information for at least some types of practical inference \cite{Rama-et-al-2018}. But we do expect to miss some true cognates that changed phonetically so much as to be not statistically identifiable from the raw data without additional expert knowledge. This makes our estimated dissimilarity matrices for Phonetic and Lexical still more correlated than the corresponding ground-truth matrices would be. This makes it all the more striking that despite a strong correlation between Phonetic and Lexical, stemming from both natural and analysis-induced sources, we find a robust and convincing effect of mismatch using CLARITY. 

To assess \textbf{significance}, we use the method described in Sec.~\ref{sec:methodssim}, dividing the data into two halves by meaning, and computing independently a similarity matrix from each half. When doing that, we always use the same global similarity scores, which represent the properties of a much larger sample of 107 languages, taken as a proxy for languages of the world in general. 

We use \textbf{cross-validation} to assess whether CLARITY decomposition at higher complexities $k$ still captures signal rather than noise. For the pairs of matrices based on two halves of the data, we predict one based on the other (so Phonetic from Phonetic and Lexical from Lexical). This shows (Supp. Fig. 3) that even at the highest $k$ we do not have overfitting to noise. Therefore the full range of $k$ in the main analysis is interpretable. 

The CLARITY setup in this example implies that the diagonal values in our similarity matrices might not be amenable to successful modelling. In the main text above, we report the results where we discount the diagonal when doing CLARITY decomposition. We checked whether the pattern we found depended on this choice. Supp. Fig 4 shows CLARITY persistences with no special treatment for the diagonal, and those are basically the same as those in Figure~\ref{fig:linguistic}. Similarly, we checked (Supp. Fig 5) that lowering the significance threshold from 0.05 to 0.01 does not change the result. Finally, in Supp. Fig 6-7), we checked that the image representation of persistence did not affect our inference, and that resampling  the persistence curves for the matrices based on half the data, used for significance testing. Taken together, these further checks convince us that our reported result is real and not spurious. 

\subsection{Mathematical validity of structural comparison}
\label{sec:math}

The matrices $Y_{1}$ and $Y_{2}$ are typically observed with non-independent noise and so there is a need for the various quantities of interest to be stable under perturbation - that is, that a small change in the data does not result in a large change to the inference. The following result describes the stability of the residual matrix under perturbation of $Y_{1}$ and $Y_{2}$ for the SVD-based solution.

\begin{thm} \label{thm:SVDperturb} Let $Y_{1}, Y_{2}, Y'_{1}, Y'_{2} \in \R^{d}$ be symmetric matrices such that $\norm{Y_{2} - Y'_{2}}_{F}, \norm{Y_{1} - Y'_{1}}_{F} \leq \epsilon$. Suppose that we have singular value decompositions $Y_{1} = U_{1}\Sigma_{1}V_{1}^{T}$ and $Y'_{1} = U'_{1}\Sigma'_{1}V_{1}^{'T}$. Let $A_{k}$ and $A'_{k}$ be the matrices obtained by taking the first $k$ columns of $U_{1}$ and $U_{1}'$ respectively.
\begin{enumerate}
\item If $X_{2}^{(k)} = A_{k}^{+}Y_{2}(A_{k}^{+})^{T}$ with $X_{2}^{'(k)}$ defined analogously for $Y'_{2}$ then
\[ \norm{Y_{2} - A_{k}X_{2}^{(k)}A_{k}^{T}}_{F} \leq \norm{Y'_{2} - A_{k}X^{'(k)}_{2}A_{k}^{T}}_{F} + 2\epsilon \]
\item Suppose that $Y_{1}$ has eigenvalues $\lambda_{1} \geq \lambda_{2} \geq \dots \geq \lambda_{d}$ and let $\delta_{k} := \lambda_{k} - \lambda_{k + 1}$ for each natural number $k < d$. Then
\[ \norm{Y_{2} - A_{k}X_{2}^{(k)}A_{k}^{T}}_{F} \leq \norm{Y_{2} - A'_{k}X_{2}^{(k)}A_{k}^{'T}}_{F} + \frac{2^{5/2} \epsilon}{\delta_{k}}.\] 
\end{enumerate}
\end{thm}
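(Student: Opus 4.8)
The plan is to reduce both inequalities to a single observation about orthogonal projections and then attack them by the reverse triangle inequality, so that in each case the whole problem collapses to bounding the Frobenius norm of the difference of two residual matrices. Since $Y_{1}$ is symmetric, its SVD coincides with its eigendecomposition and the columns of $A_{k}$ are orthonormal; hence $A_{k}^{+} = A_{k}^{T}$ and, by Proposition \ref{prop:minproj}, $A_{k}X_{2}^{(k)}A_{k}^{T} = P_{A_{k}}Y_{2}P_{A_{k}}$ with $P_{A_{k}} = A_{k}A_{k}^{T}$ the orthogonal projection onto $\im(A_{k})$; the same holds for the primed quantities. Writing $R(P) := Y_{2} - PY_{2}P$, each side of each inequality has the form $\norm{R(P)}_{F}$, and the reverse triangle inequality gives $\norm{R(P)}_{F} \leq \norm{R(Q)}_{F} + \norm{R(P) - R(Q)}_{F}$. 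Everything therefore hinges on estimating $\norm{R(P) - R(Q)}_{F}$.

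For part (1) the two residuals use the \emph{same} projection $P_{A_{k}}$ but different data, so their difference equals $\Phi(E)$ with $E := Y_{2} - Y'_{2}$ and $\Phi(E) = E - P_{A_{k}}EP_{A_{k}}$. Since an orthogonal projection is a contraction in the operator norm, $\norm{P_{A_{k}}EP_{A_{k}}}_{F} \leq \norm{E}_{F}$, so $\norm{\Phi(E)}_{F} \leq 2\norm{E}_{F} \leq 2\epsilon$, which is the claim. This part is essentially immediate.

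For part (2) the data $Y_{2}$ is fixed but the projection changes from $P_{A_{k}}$ to $P_{A'_{k}}$ because the reference was perturbed from $Y_{1}$ to $Y'_{1}$. I would expand the difference by the telescoping identity
\[ P_{A'_{k}}Y_{2}P_{A'_{k}} - P_{A_{k}}Y_{2}P_{A_{k}} = P_{A'_{k}}Y_{2}(P_{A'_{k}} - P_{A_{k}}) + (P_{A'_{k}} - P_{A_{k}})Y_{2}P_{A_{k}}, \]
and bound each summand using $\norm{ABC}_{F} \leq \norm{A}_{\mathrm{op}}\norm{B}_{\mathrm{op}}\norm{C}_{F}$ together with $\norm{P}_{\mathrm{op}} = 1$, giving $\norm{R(P_{A_{k}}) - R(P_{A'_{k}})}_{F} \leq 2\norm{Y_{2}}_{\mathrm{op}}\norm{P_{A_{k}} - P_{A'_{k}}}_{F}$. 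The crux is then to bound the perturbation of the rank-$k$ eigenprojection of the symmetric matrix $Y_{1}$ under a change of size $\epsilon$: this is exactly the Davis--Kahan $\sin\Theta$ theorem, which controls the principal angles $\Theta$ between $\im(A_{k})$ and $\im(A'_{k})$ by $\norm{\sin\Theta}_{F} \leq 2\epsilon/\delta_{k}$, the eigengap $\delta_{k} = \lambda_{k} - \lambda_{k+1}$ entering in the denominator. Combining this with the standard identity $\norm{P_{A_{k}} - P_{A'_{k}}}_{F} = \sqrt{2}\,\norm{\sin\Theta}_{F}$ and the factor $2$ from the telescoping step assembles the constant $2^{5/2}$.

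The main obstacle lives entirely in part (2), and it is conceptual rather than computational: the bases $A_{k}$ and $A'_{k}$ are only determined up to an orthogonal rotation within the top-$k$ subspace (and are genuinely unstable when the eigenvalues are nearly degenerate), so they cannot be compared entrywise. The correct invariant is the projection $P_{A_{k}}$, and its stability is governed precisely by the spectral gap $\delta_{k}$ --- as the gap closes, the subspace swings freely and the bound blows up, which is why $\delta_{k}$ sits in the denominator and why the hypothesis $\delta_{k} > 0$ is essential. I would also flag that the telescoping step genuinely produces a factor $\norm{Y_{2}}_{\mathrm{op}}$; to land exactly on $2^{5/2}\epsilon/\delta_{k}$ one must either absorb this into a normalisation of $Y_{2}$ or carry it explicitly. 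Finally, the reliance on $Y_{1}$ being symmetric --- so that $A_{k}$ is an orthonormal eigenbasis, Proposition \ref{prop:minproj} yields clean projections, and Davis--Kahan applies --- should be made explicit.
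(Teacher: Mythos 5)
Your proposal is correct and follows essentially the same route as the paper: reduce both parts to bounding the Frobenius norm of the difference of the two residual matrices via the triangle inequality, handle part (1) by the contractivity of $P_{A_{k}}$, and handle part (2) by telescoping $P_{A'_{k}}Y_{2}P_{A'_{k}} - P_{A_{k}}Y_{2}P_{A_{k}}$, converting $\norm{P_{A'_{k}} - P_{A_{k}}}_{F}$ to $\sqrt{2}\norm{\sin\Theta}_{F}$, and invoking the Davis--Kahan $\sin\Theta$ theorem with the eigengap $\delta_{k}$. Your flag about the extra factor of the norm of $Y_{2}$ in part (2) is well taken: the paper's own derivation arrives at the bound $2\norm{P'_{k} - P_{k}}_{2}\norm{Y_{2}}_{F}$, so the stated inequality as written implicitly requires $\norm{Y_{2}}_{F} \leq 1$ (or an analogous normalisation), and an honest statement should carry that factor explicitly, exactly as you note.
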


The proof of Theorem \ref{thm:SVDperturb} may be found in the appendix. Theorem \ref{thm:SVDperturb} can be used for statistical purposes as follows. If $Y'_{1}$ and $Y'_{2}$ are sampled matrices that are believed to be close to their population counterparts $Y_{1}, Y_{2}$ (for example when dealing with covariances), then given suitably sized eigengaps $\delta_{k}$ and $\delta_{k}'$ for $Y_{1}$ and $Y_{1}'$ respectively, the Frobenius norm of the estimated residual matrix is close to that of the true residual matrix. Specifically, simple manipulation of the inequalities established in Theorem \ref{thm:SVDperturb} leads to the deviation inequality
\[ |\norm{Y_{2} - A_{k}X_{2}^{(k)}A_{k}^{T}}_{F} - \norm{Y_{2}' - A'_{k}X_{2}^{'(k)}A_{k}^{'T}}_{F}| \leq \frac{2 + 2^{5/2}}{\min(\delta_{k}, \delta_{k}')}\epsilon \] where $\norm{Y_{2} - Y'_{2}}_{F}, \norm{Y_{1} - Y'_{1}}_{F} \leq \epsilon$. 

%
%

%
\bibliographystyle{apalike}

%
\appendix

\section{Proofs}
\label{sec:appendix}


\subsection{Notation} In addition to the notation already introduced, if $A$ is a matrix, its spectral norm is denoted by $\norm{A}_{2}$. The singular values $\sigma_{1}(A) \geq \sigma_{2}(A) \dots$ of $A$ are listed in non-increasing order, and so $\norm{A}_{2} = \sigma_{1}(A)$. If $v$ is a vector, its Euclidean norm is denoted by $\norm{v}$. If $A$ is a matrix, its vectorisation (the vector obtained by stacking the columns of $A$) is denoted by $\Vect(A)$. 

\subsection{Preliminary facts} Recall that for any matrices $A, B$ and $C$ where the product $ABC$ exists, we have the identity $\Vect(ABC) = (C^{T} \otimes A)\Vect(B)$ where $\otimes$ denotes the Kronecker product of two matrices. This identity is useful in what follows.

If $V, V'$ are $d \times k$ matrices with orthonormal columns, we have a vector $(\cos^{-1}(\sigma_{1}), \dots, \cos^{-1}(\sigma_{k}))^{T}$ of principal angles, where the $\sigma_{j}$ are the singular values of $V^{'T}V$. Let $\Theta(V', V)$ denote the $r \times r$ diagonal matrix with the $j$-th diagonal entry given by the $j$-th principal angle. The matrices $\sin \Theta(V', V)$ and $\cos \Theta(V', V)$ are defined entry-wise. The perturbation bounds established rely on the following variant of the Davis-Kahan theorem \cite{YuusefulvariantDavis2015}.

\begin{thm} \label{thm:dk} Let $Y, Y' \in \R^{d \times d}$ be symmetric matrices with eigenvalues $\lambda_{1} \geq \dots \geq \lambda_{d}$ and $\lambda'_{1} \geq \dots \geq \lambda'_{d}$ respectively. Fix $1 \leq r \leq s \leq d$ and suppose that $\delta_{r, s} := \min(\lambda_{r - 1} - \lambda_{r}, \lambda_{s} - \lambda_{s + 1}) > 0$, where $\lambda_{0} := \infty$ and $\lambda_{d + 1} := - \infty$. Put $p = s - r + 1$ and define $V := [v_{r} | v_{r+1} | \dots | v_{s}], V' := [v'_{r} | v'_{r+1} | \dots | v'_{s}]$, both with orthonormal columns, satisfying $Yv_{j} = \lambda_{j} v_{j}$ and $Y'v'_{j} = \lambda'_{j} v'_{j}$ for each $j = r, r+1, \dots, s$. Then
\[ \norm{\sin \Theta(V', V)}_{F} \leq \frac{2 \min(p^{1/2}\norm{Y - Y'}_{2}, \norm{Y - Y'}_{F})}{\delta_{r, s}} \] 
\end{thm}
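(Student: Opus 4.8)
The plan is to reduce the quantity $\norm{\sin\Theta(V',V)}_{F}$ to the size of an off-diagonal block, express that block as the solution of a Sylvester equation driven by the perturbation $E := Y' - Y$, and then control the eigenvalue denominators with Weyl's inequality so that only the eigengap $\delta_{r,s}$ of $Y$ appears. First I would complete $V'$ to an orthonormal basis by a matrix $V'_{\perp} \in \R^{d \times (d - p)}$ whose columns span $\im(V')^{\perp}$. The sines of the principal angles between $\im(V)$ and $\im(V')$ are precisely the singular values of $(V'_{\perp})^{T} V$, so the identity $\norm{\sin\Theta(V', V)}_{F} = \norm{(V'_{\perp})^{T} V}_{F} =: \norm{M}_{F}$ converts the target into a bound on the single matrix $M$.

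Next I would exploit that $V$ and $V'_{\perp}$ are eigenspaces. Writing $\Lambda := \mathrm{diag}(\lambda_{r}, \dots, \lambda_{s})$ and $\Lambda'_{\perp}$ for the diagonal matrix of the eigenvalues $\lambda'_{j}$ of $Y'$ with $j \notin \{r, \dots, s\}$, the relations $YV = V\Lambda$ and $Y'V'_{\perp} = V'_{\perp}\Lambda'_{\perp}$ together with symmetry give the Sylvester identity $\Lambda'_{\perp} M - M \Lambda = (V'_{\perp})^{T} E V$. Read entrywise this says $(\lambda'_{j} - \lambda_{i}) M_{ji} = ((V'_{\perp})^{T} E V)_{ji}$, so dividing by the eigenvalue differences and summing squares yields $\norm{M}_{F} \leq \norm{(V'_{\perp})^{T} E V}_{F} / \mathrm{sep}$, where $\mathrm{sep} := \min |\lambda'_{j} - \lambda_{i}|$ is taken over $i \in \{r, \dots, s\}$ and $j \notin \{r, \dots, s\}$. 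The numerator supplies both branches of the stated minimum at once: since $V$ and $V'_{\perp}$ have orthonormal columns, submultiplicativity gives $\norm{(V'_{\perp})^{T} E V}_{F} \leq \norm{E}_{F}$, while bounding a Frobenius norm by the square root of the rank times the spectral norm, and noting $(V'_{\perp})^{T} E V$ has at most $p$ columns, gives $\norm{(V'_{\perp})^{T} E V}_{F} \leq p^{1/2}\norm{E}_{2}$.

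It remains to bound $\mathrm{sep}$ from below by $\delta_{r,s}/2$, which is where the factor $2$ is produced. Weyl's inequality gives $|\lambda'_{j} - \lambda_{j}| \leq \norm{E}_{2}$ for every $j$; applying this at $j = r - 1$ and $j = s + 1$ and combining with $\lambda_{r-1} - \lambda_{r} \geq \delta_{r,s}$ and $\lambda_{s} - \lambda_{s+1} \geq \delta_{r,s}$ shows that every $\lambda'_{j}$ with $j \notin \{r, \dots, s\}$ lies at distance at least $\delta_{r,s} - \norm{E}_{2}$ from the interval $[\lambda_{s}, \lambda_{r}]$ that contains all the $\lambda_{i}$, so $\mathrm{sep} \geq \delta_{r,s} - \norm{E}_{2}$. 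When $\norm{E}_{2} \leq \delta_{r,s}/2$ this is at least $\delta_{r,s}/2$, and the two preceding steps combine to give the claimed inequality with its constant $2$ and its minimum over the two norms.

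The part I expect to require genuine care is the complementary regime $\norm{E}_{2} > \delta_{r,s}/2$, where the spectra of $Y$ and $Y'$ may interleave, $\mathrm{sep}$ can vanish, and the Sylvester step is unavailable. Here I would fall back on the trivial bound $\norm{\sin\Theta(V',V)}_{F} \leq p^{1/2}$, valid because there are at most $p$ principal angles each with sine at most $1$; this already matches the $p^{1/2}\norm{E}_{2}$ branch of the minimum, since $2 p^{1/2} \norm{E}_{2} / \delta_{r,s} > p^{1/2}$. The subtle point is reconciling this large-perturbation regime with the $\norm{E}_{F}$ branch, for which the crude dimension bound is not by itself sufficient and one must invoke the quantitative trade-off between how far a $p$-dimensional subspace can rotate and the energy of the perturbation producing the rotation. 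This is exactly the content of the refinement quoted as \cite{YuusefulvariantDavis2015}, and I would complete this branch following that reference to obtain the stated constant.
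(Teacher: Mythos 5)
The paper offers no proof of this statement to compare against: Theorem \ref{thm:dk} is imported verbatim from \cite{YuusefulvariantDavis2015} as a ``preliminary fact'' and is only \emph{used}, in the appendix, to prove Theorem \ref{thm:SVDperturb}. Judged on its own merits, your sketch follows the same strategy as the published proof in that reference, and every step you make explicit is correct: the identity $\norm{\sin\Theta(V',V)}_F = \norm{(V'_{\perp})^T V}_F$; the Sylvester relation $\Lambda'_{\perp} M - M\Lambda = (V'_{\perp})^T E V$ (legitimate because $Y'$ is symmetric, so $v'_r,\dots,v'_s$ extends to a full orthonormal eigenbasis and $Y'V'_{\perp} = V'_{\perp}\Lambda'_{\perp}$ holds); the two numerator bounds $\norm{(V'_{\perp})^T E V}_F \leq \min\left(p^{1/2}\norm{E}_2, \norm{E}_F\right)$; and the Weyl step giving $\mathrm{sep} \geq \delta_{r,s} - \norm{E}_2$, hence $\geq \delta_{r,s}/2$ when $\norm{E}_2 \leq \delta_{r,s}/2$ (for $j < r-1$ you should also invoke the ordering $\lambda'_j \geq \lambda'_{r-1}$, but that is immediate). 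You also correctly dispose of the $p^{1/2}\norm{E}_2$ branch in the complementary regime via the trivial bound $\norm{\sin\Theta(V',V)}_F \leq p^{1/2}$.

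The one place your argument is not self-contained is exactly the place you flag yourself: the $\norm{E}_F$ branch when $\norm{E}_2 > \delta_{r,s}/2$, where the separation can vanish, the Sylvester division is unavailable, and $p^{1/2}$ need not be below $2\norm{E}_F/\delta_{r,s}$ (take $\norm{E}_F = \norm{E}_2$ slightly above $\delta_{r,s}/2$ with $p \geq 4$). As a standalone proof this is a genuine gap --- closing it is precisely the non-trivial content of the refinement in \cite{YuusefulvariantDavis2015}, and it requires more than the crude dimension bound (heuristically, rotating $m$ directions across the gap forces $m$ eigenvalues to move by order $\delta_{r,s}$, which by a Hoffman--Wielandt-type accounting costs $\norm{E}_F^2 \gtrsim m\,\delta_{r,s}^2$). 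But since the paper itself quotes the entire theorem from that reference without proof, your partial reproof plus citation for the hard branch is, if anything, more complete than the paper's own treatment: you have reproduced the standard mechanism correctly and accurately isolated the single point where the cited result earns its keep.
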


\subsection{Proof of Theorem \ref{thm:SVDperturb}}
\begin{enumerate}
\item Let $P_{k} = P_{A_{k}}$. Then 
\begin{align*} \norm{A_{k}(X'_{2}- X_{2})A_{k}^{T}}_{F} 
	&= \norm{P_{k}(Y'_{2} - Y_{2})P_{k}}_{F} \\
	&= \norm{(P_{k} \otimes P_{k})\Vect(Y'_{2} - Y_{2})} \\
	&\leq \norm{P_{k} \otimes P_{k}}_{2}\norm{Y'_{2} - Y_{2}}_{F} \\
	&= \epsilon 
\end{align*} and the claim follows by the triangle inequality.
\item Let $P'_{k} = P_{A'_{k}}$. Then,
\begin{align*} \norm{P'_{k}Y_{2}P'_{k} &- P_{k}Y_{2}P_{k}}_{F} \\
	&\leq \norm{(P'_{k} - P_{k})Y_{2}P'_{k}}_{F} + \norm{P_{k}Y_{2}(P'_{k} - P_{k})}_{F} \\
	&\leq \norm{(P'_{k} \otimes (P'_{k} - P_{k}))\Vect(Y_{2})} + \\
                                       & \hspace{1cm}\norm{(P'_{k} - P_{k}) \otimes P_{k})\Vect(Y_{2})} \\
	&\leq 2 \norm{P'_{k} - P_{k}}_{2} \norm{Y_{2}}_{F} 
\end{align*} Moreover, 
\begin{align*} \norm{P'_{k} - P_{k}}_{2}^{2} 
	&\leq \norm{P'_{k} - P_{k}}_{F}^{2} \\
	&= \norm{P'_{k}}_{F}^{2} + \norm{P_{k}}_{F}^{2} - 2\Tr(P_{k}P'_{k}) \\
	&=2(k - \norm{\cos \Theta(U'_{k}, U_{k})}_{F}^{2}) \\
	&=2\norm{\sin \Theta(U'_{k}, U_{k})}_{F}^{2} \\
	&\leq 8 \frac{\norm{Y_{1} - Y_{1}'}^{2}_{F}}{\delta_{k}^{2}} 
\end{align*} where Theorem \ref{thm:dk} has been used to obtain the last inequality. The claim follows by the triangle inequality.
\end{enumerate}



\section{Reproducibility}

Data and relevant code (R packages \emph{CLARITY} and \emph{CLARITYsim}) for this research work are stored in GitHub: \url{github.com/danjlawson/CLARITY} and have been archived within the Zenodo repository: \url{https://doi.org/10.5281/zenodo.5172063}.

\section{Acknowledgements}

D.J.L is funded by the Wellcome Trust and Royal Society Sir Henry Dale Fellowship, grant no. WT104125MA.
D.J.L and P.E. are supported by the OCSEAN grant funded by the EU Research Executive Agency (Horizon 2020 MSCA RISE 2019 number 873207).
J.D. has been supported by the German Research Foundation (DFG) under FOR 2237 "Words, Bones, Genes, Tools" and by the European Research Council (ERC) under the Horizon 2020 research and innovation programme (CrossLingference, grant agreement no. 834050,
I.Y. has been supported by the German Research Foundation (DFG) under Emmy-Noether-NWG 391377018 and under FOR 2237 "Words, Bones, Genes, Tools".

\end{document}